\newcolumntype{P}[1]{>{\arraybackslash}p{#1}}
\DeclarePairedDelimiter\floor{\lfloor}{\rfloor}
\newtheorem{definition}{Definition}
\newtheorem{proposition}{Proposition}
\begin{document}
\title{Achieving QoS for Real-Time
Bursty Applications over Passive Optical Networks}
	\author{Dibbendu Roy, Aravinda S. Rao, Tansu Alpcan, Goutam Das and Marimuthu Palaniswami}
	
	\author{
		Dibbendu\;Roy,
		Aravinda.\;S.\;Rao,\;\IEEEmembership{Member,\;IEEE,}
		Tansu\;Alpcan,\;\IEEEmembership{Senior\;Member,\;IEEE,}
		Goutam Das, 
		and\;Marimuthu\;Palaniswami,\;\IEEEmembership{Fellow,\;IEEE,}
		\thanks{}
		\thanks{D. Roy, A. S. Rao, T. Alpcan and M. Palaniswami are with the Department
			of Electrical and Electronic Engineering, The University of Melbourne, Parkville, VIC 3010, Australia (e-mail: dibbendur@student.unimelbedu.au, aravinda.rao@unimelb.edu.au, tansu.alpcan@unimelb.edu.au, palani@unimelb.edu.au).}
		\thanks{G. Das is with the G.S Sanyal School of Telecommunication, Indian Institute of Technology (IIT) Kharagpur, India (email: gdas@gssst.iitkgp.ac.in).}
		\thanks{This research was supported in part by the Australian Government through the Australian Research Council's Discovery Projects funding scheme (project DP190102828).}
	}
	\maketitle
	\begin{abstract}
	Emerging real-time applications such as those classified under ultra-reliable low latency (uRLLC) generate bursty traffic and have strict Quality of Service (QoS) requirements. Passive Optical Network (PON) is a popular access network technology, which is envisioned to handle such applications at the access segment of the network. However, the existing standards cannot handle strict QoS constraints. The available solutions rely on instantaneous heuristic decisions and maintain QoS constraints (mostly bandwidth) in an average sense. Existing works with optimal strategies are computationally complex and are not suitable for uRLLC applications.
    This paper presents a novel computationally-efficient, far-sighted bandwidth allocation policy design for facilitating bursty traffic in a PON framework while satisfying strict QoS (age of information/delay and bandwidth) requirements of modern applications. 
	To this purpose, first we design a delay-tracking mechanism which allows us to model the resource allocation problem from a control-theoretic viewpoint as a Model Predictive Control (MPC). MPC helps in taking far-sighted decisions  regarding  resource  allocations  and  captures the time-varying dynamics of the network. We provide computationally efficient polynomial-time solutions and show its implementation in the PON framework. Compared to existing approaches, MPC reduces delay violations by approximately 15$\%$ for a delay-constrained application of 1ms target. Our approach is also robust to varying traffic arrivals.

	\end{abstract}
	\begin{IEEEkeywords}
		Internet of Things (IoT); Quality of Service (QoS); Passive Optical Network (PON); Resource Allocation; Model Predictive Control (MPC).
	\end{IEEEkeywords}
	
	\IEEEpeerreviewmaketitle
	\section{Introduction}
	\label{intro}
Modern networking applications such as Augmented Reality/Virtual Reality (AR/VR), haptic communication and autonomous vehicles \cite{giordani2020toward,chowdhury20196g,fitzek2020computing,philip2018distributed} are typical use cases for
real-time IoT. International Telecommunication Union broadly categorize mission critical applications under ultra-reliable low latency (uRLLC) which require strict delay constraints to be satisfied with high reliability of 99.99\%. These applications demand real-time service with stringent delay constraints in the order of sub milliseconds along with high bandwidth requirements. This is also seen in industrial applications, such as the Smart Grid (3-20 ms) and Smart Factory (0.25-10 ms) settings \cite{schulz2017latency,da2014internet,sisinni2018industrial}. It has already been identified that the traffic pattern \cite{apicharttrisorn2020characterization,mondal2020enabling} for these applications do not resemble those of simple voice/video and data services, currently catered by existing networking standards \cite{itu2008g,ieeestd}. These applications are event driven and exhibit bursty traffic situations for which existing resource allocation standards are inefficient. This paper aims at designing a novel mechanism to obtain optimal resource allocation policies for bursty, delay-constrained and bandwidth-demanding applications over access networks.

Among the existing access networking technologies, Passive Optical Networks (PONs) have been identified as a promising solution for the high speed access network owing to its high bandwidth and low energy consumption \cite{rimal2017cloudlet,rimal2018experimental,helmy2018feasibility,helmy2018toward,comparewith,neaime2018resource}.
A PON (see Fig. \ref{poniot}) typically consists of ONUs (connected to users). The fibers from ONUs are aggregated at a passive power splitter called remote node. A feeder fiber with high operating bandwidth connects the remote node to a central office called Optical Line Terminal (OLT). To facilitate 
emerging uRLLC applications in a PON based framework, a fog/edge node can be plugged at the remote node (to reduce round-trip delay compared to plugging at OLT) in an overlay fashion \cite{rimal2017cloudlet,roy2020cost,comparewith} as shown in Fig. \ref{poniot}. 
In such a framework, similar to the OLT, fog node has the important task of allocating bandwidth to the ONUs so that the strict Quality of Service (QoS) requirements of uRLLC applications can be met. Important QoS parameters may include delay, age of information (AoI)\cite{zhou2019minimizing}, bandwidth/throughput and jitter. Evidently, the allocation decisions at a given instant affects future decisions. To design optimal policies satisfying strict QoS constraints, the following requirements must be taken into consideration:
\begin{itemize}
    \item The designed policy must be able to handle the bursty traffic nature of modern networking applications
    \item These policies must be far-sighted i.e., designed with a purview of the future and overall system dynamics.
    \item For real-time implementations, the algorithm to find such policies must be computationally tractable.
\end{itemize}

	
	\begin{figure}
		\centering
		\includegraphics[scale = 2]{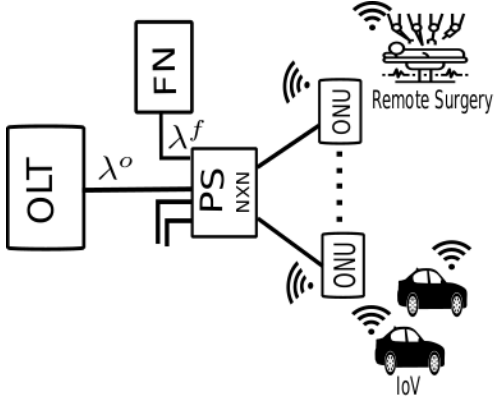}	
		\caption{Optical Distribution Network to facilitate applications like remote surgery and internet of vehicles (IoVs). Fog Node (FN) and OLT use different wavelengths for communicating with ONUs. FN is plugged to any one of many available ports of an $N\times N$ passive power splitter (PS).}
		\label{poniot}
	\end{figure}
	
	In PON, the existing standards \cite{itu2008g,ieeestd} do not have suitable mechanisms to deal with the QoS constraints under bursty traffic conditions. The existing proposals which are employed by the OLT for managing QoS (bandwidth) requirements \cite{assi2003dynamic,itu2008g,ieeestd}, may be applied at a fog node. These proposals, similar to the available standards, are based on broad traffic classification and employ heuristics based on the currently available information and only consider delays in average sense. Thus, the existing policies cannot handle bursty traffic with strict QoS requirements and are short-sighted.
	
	In past decade, several methods have been proposed to maximize throughput in a network. Under simplifying assumptions, they come up with policies which are provably optimal \cite{lin2006tutorial,ying2010combining,singh2018throughput}. However such policies perform poorly when strict delay constraints are to be maintained \cite{singh2018throughput}. As pointed out in \cite{singh2018throughput}, in order to maintain strict delay constraints and design far-sighted policies, one should consider solving Markov Decision Process (MDP) based models with large state spaces, which are computationally intractable.
	It is evident that, the resource allocation problem under QoS constraints leads to complex formulations and computationally intractable policies. Due to the high complexity of computing optimal allocation policies, most resource allocation rules are based on heuristic decisions inherently depend on currently available information.

	
In summary, \textit{to the best of our knowledge, the existing standards cannot handle strict QoS constraints for applications generating bursty traffic. The available proposals are  short-sighted heuristics and maintain QoS constraints (mostly bandwidth) in average sense. The proposals with optimal strategies are computationally complex and are not suitable for uRLLC applications}. Thus, it is important to model and design allocation policies under strict delay constraints which are far-sighted, easy to compute and are provably optimal. A detailed literature review is provided in Section \ref{relpon} identifying the drawbacks of existing proposals.

To meet the stringent delay constraints of real-time Internet and uRLLC applications in a polling based wire-line system (like PON), we develop a generic delay-tracking mechanism by employing virtual queues. This allows us to pose the resource allocation problem from a control theoretic perspective as a Model Predictive Control (MPC) problem. In contrast to MDP formulations, the MPC problem does not have stochastic state-action relations, leading to reduced state-action space. MPC helps in taking far-sighted decisions regarding  resource  allocation  rather than short-sighted (decisions based only on currently available information) and  captures the time-varying dynamics of the network. The MPC problem turns out to be an Integer Linear Program (ILP). We prove that the ILP for the short-sighted scenario can be solved by an equivalent max-flow problem \cite{papadimitriou1998combinatorial} which has several computationally efficient implementations. For the generic MPC problem, we show that linear program (LP) relaxation of the ILP yields optimal solutions. This is a significant reduction in complexity which is desired for providing delay-constrained services as the problem can be solved in polynomial time. We employ the developed technique in a PON framework as shown in Fig.\ref{poniot}. Although we apply the methodology in a PON framework, the solution strategies would be applicable for any wire-line polling based network system.
	\subsection{Contributions of the paper}
	\label{contri}
	The main contributions of the paper can be summarized as:
	\begin{itemize}
	\item We present a delay-tracking mechanism with help of virtual queues enabling us to formulate the problem of resource allocation, subject to delay and bandwidth constraints as an MPC.
	\item We show that the short-sighted scenario of the problem can be solved by an equivalent max-flow representation while the generic far-sighted problem can be solved by an equivalent LP. Thus, efficient implementation of the MPC is guaranteed with low complexity.
	\item We show how our developed model can be implemented over an overlay PON.
	\item Compared to existing proposals, our results show significant improvement in reducing in percentage of delay violations by approximately 15\% and increasing the throughput at the cost of average delay performance. Our approach is robust to varying traffic arrivals.
	\end{itemize}
	
	\subsection{Relevant Works and Literature Gaps}
	\label{relpon}
	We briefly mention the relevant literature in context of resource allocation. As discussed in Section \ref{intro}, we classify the available literature in three categories - available standards for PON 1)available standards for PON  2) specific to PON and emerging applications 3) general networking. 
	The  Giga-bit PON (GPON) \cite{itu2008g}  and Ethernet PON (EPON) \cite{ieeestd} are the two available PON standards. Although these differ majorly in terms of specifications (like frame structures and data rates), the philosophy of dynamic bandwidth allocation in both are similar. In both GPON and EPON, OLT allocates bandwidth using a polling mechanism where the allocation decisions are sent in the downstream. The ONUs send the status of their buffers and QoS requirements along with their data transmissions in the upstream. In GPON, ONUs are equipped with virtual buffers called transmission containers (T-CONTs) for different kinds of services. However, for strict delay requirements (real-time), only constant bit-rate (CBR) traffic can be accommodated. Same is the case for EPON, where traffic is classified into three broad categories: Expedited Forwarding (EF), Assured Forwarding (AF) and Best Effort (BE). Each ONU has three virtual buffers classifying its traffic in either of the three. EF deals with CBR (similar to GPON) while AF deals with assured bandwidth guarantees without strict delay guarantees.
	
	The existing proposals in PON for real-time Internet and Emerging applications \cite{comparewith,ou2017resource} mostly deal with network architecture and mere feasible allocation propositions based on available standards. The authors of \cite{comparewith,helmy2018toward} propose a decentralized protocol which employs an out of band (OOB) wavelength for offloading fog data. In \cite{comparewith}, each ONU uses the out of band wavelength for offloading its fog node upstream only during its turn for OLT upstream. Hence, the amount of bandwidth allocated for fog upstream is restricted by the duration of OLT upstream transmission. The major drawback of such a scheme is that the fog node upstream scheduling does not depend on the fog node-traffic, instead it depends on the OLT traffic. A more justified treatment was provided by the authors of \cite{ou2017resource}, where they employ slicing of the available bandwidth and reserve a portion of the bandwidth for fog services. The amount to be reserved could be decided by means of prediction or learning mechanisms. Also, the fog services (mentioned as migratory services by the authors) are prioritized over the normal best effort services. In \cite{roy2020cost}, we proposed a bandwidth allocation protocol for the fog node which schedules the fog services when the ONUs are free of its OLT upstream. 
	
	In networking literature, several works discuss the problem of designing routing, scheduling and allocation policies while maximizing throughput. The authors of \cite{singh2018throughput} provide a comprehensive idea of the available works in the area and suggest that under unconstrained scenarios certain policies have been proved to be throughput optimal. The authors have shown a generic networking scenario with routing and power allocation policies formulated as an Markov Decision Process (MDP). The requirement for scheduling was largely avoided by assuming no contention in the links and no link capacity constraints. Under these simplifying assumptions they reduce the MDP to a LP. In general, on invoking other constraints, the problem remains computationally intractable and not suitable for emerging applications.



A fog service provider operates a network with a given capacity and has to serve multiple services with different service level agreements (SLAs). Athough the fog node can employ existing dynamic bandwidth allocation (DBA) protocols which aim at maintaining QoS at OLT, they suffer from the following shortcomings: 
	
    \begin{itemize}
		\item 
		
		Existing standards based on broad classification of the traffic will not be suitable for bursty traffic with strict delay constraints. A generic model for satisfying different delay and bandwidth requirements is currently absent.
		\item 
        All existing proposals take short-sighted heuristic decisions and have considered delay constraints in an average sense. Evidently, such proposals fail to guarantee strict constraints and are non-optimal.
		\item To satisfy strict delay constraints, a complex MDP based model is usually developed which are not only computationally intractable but also unsuitable for uRLLC applications..
	\end{itemize}

	The rest of the paper is organized as follows. We describe the considered system model in Section \ref{sysmod} which contains details of our proposed delay-tracking mechanism followed by a brief description of control theoretic tool - MPC. Section \ref{probdef} poses the resource allocation problem as an MPC followed by its solution in Section \ref{solmethod}. The details of how the obtained solution can be employed in a PON framework is discussed in Section \ref{impl}. Finally, we present our simulation results in Section \ref{sim} and conclude the paper with insightful remarks and future directions in Section \ref{conc}.
	\section{PON System Model}
	\label{sysmod}
	We consider a PON with operating bandwidth of B bits per second. IoT services have different QoS requirements which are broadly classified as certain classes of service (CoS) \cite{giordani2020toward,chowdhury20196g}. Each class of service corresponds to a different delay and bandwidth requirement. The bandwidth and delay requirement for each class $c \in C$ ($C$ denotes the set of classes) is denoted by $(b_c,d_c)$ where $b_c$ is in bps and $d_c$ is in seconds (see Table \ref{tab}). The bandwidth requirement is to be maintained for a class of service over some time. We assume that the delays due to wireless access and processing of tasks at fog node can be estimated and accounted for while considering the delay constraint. An important QoS parameter in case of IoT devices is the age of information (AoI) \cite{zhou2019minimizing}. It is defined as the time elapsed since the most recent received status from an IoT device. For a polling based scheme, AoI is equivalent to the duration of a polling cycle.
	\begin{table}[!ht]
			\caption{Description of Notations}
			\label{tab}
			\begin{tabular}{|P{0.2\linewidth} | P{0.7\linewidth}|}
				\hline
				\multicolumn{1}{|c|}{\textbf{Notation}} &  \multicolumn{1}{|c|}{\textbf{Description}}  \\  
				\hline
				$N$ & Set of ONUs\\
				\hline
				$C$ & Set of classes\\
				\hline
				$B$ & Bandwidth of PON   \\
				\hline
				$b_c$ & Bandwidth requirement for class c in bps\\
				\hline
				$d_c$ & Delay requirement for class c in secs\\
				\hline
				$A^c(t)$ &Number of arrivals of class $c$ in a slot $t$ (Fig. \ref{vq})\\
				\hline
				$T_s$ &Duration of a time slot\\
				\hline
				$P_s$ & Packet size in bits\\
				\hline
				$\Lambda = \floor{\frac{BT_s}{P_s}}$ & Maximum number of packets that can be cleared in $T_s$\\
				\hline
				$\Lambda^c $ & Maximum number of packets that can be cleared for to maintain bandwidth constraint for class $c$ over horizon $H$\\
				\hline
				$Q_i^c(t)$ & $i^{th}$ Queue for tracking delay of class $c$ and time slot $t$ \\
				\hline
				$x_i^c$ & Allocation corresponding to $Q_i^c(t)$ obtained by solving \eqref{mpcprob2}\\
				\hline
				$K^c$ & Number of virtual queues for tracking delay constraint $d_c$ \\
				\hline
				H & Considered horizon for MPC\\
				\hline
				$rtt_i^f$ & Round trip from fog to $ONU_i$\\
				\hline				
			\end{tabular}
		
	\end{table}

	We consider plugging fog node at the remote node \cite{roy2020cost}. It is attached to the available free ports of the passive power splitter as shown in Fig. \ref{poniot}.The fog node communicates with the ONUs (equipped with separate transceivers for fog and OLT communication) at a wavelength different from that of the OLTs, thereby avoiding collisions among them. The bandwidth allocation for each ONU is decided by the fog node. The fog node, similar to the OLT, employs a standard polling based scheme. The details of how the fog node implements the allocation protocol will be discussed in Section \ref{impl}.  By means of the polling mechanism, the ONUs inform their demands for each class to the fog node. Once this information is available at the fog node, it decides on how these demands are to be satisfied such that the overall throughput is maximized without violating the QoS constraints.
	
	
	Since we have strict delay constraints, it is evident that delay for packets need to be tracked. However, tracking delay for each packet requires large storage requirements which might not be available at facilities such as a fog node. By considering a slotted time system, the delay experienced by packets in a slot can be summarized. We present a detailed description of our delay-tracking mechanism which would aid in formulating the allocation problem.
	\begin{figure}
		\includegraphics[width=\linewidth]{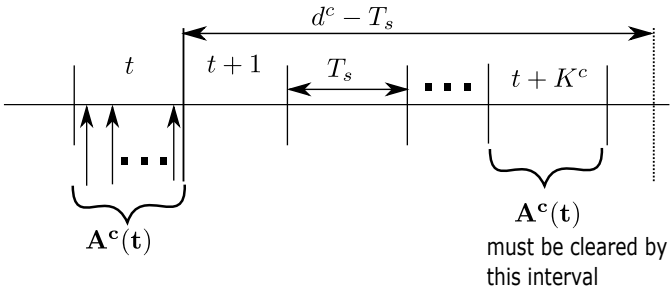}
		\caption{Finding $K^c$ as the maximum number of slots by which arrivals for a class must be cleared.}
		\label{kc}
	\end{figure}
	
	\subsection{delay-tracking Model using Virtual Queues}
	\label{delaytrack}
	\begin{figure}
		\includegraphics[scale=0.7]{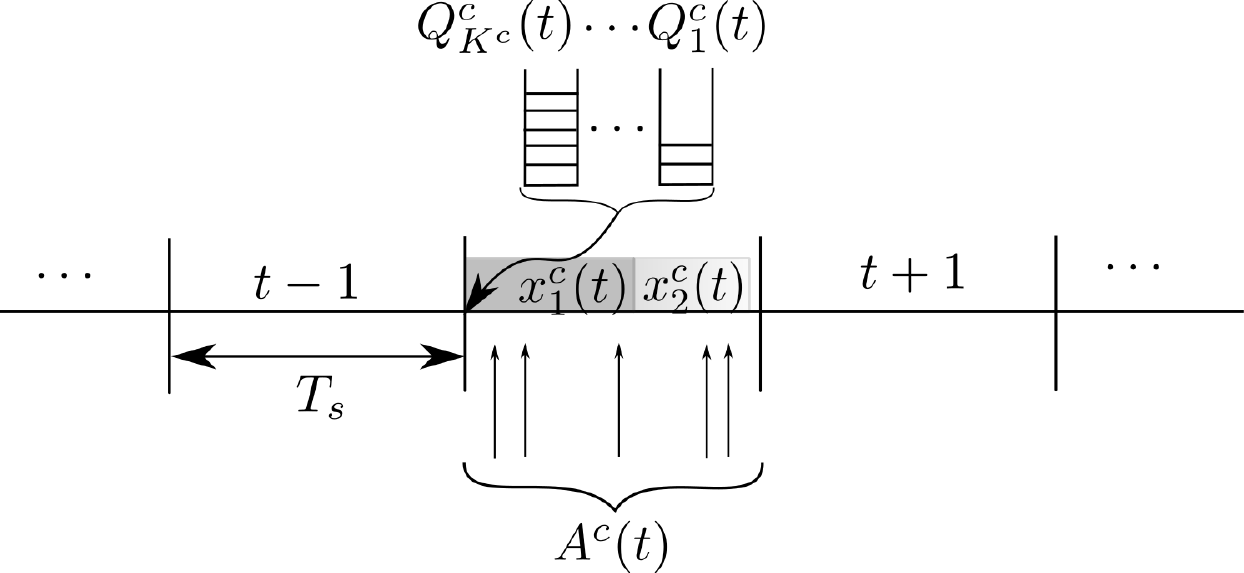}
		\caption{Mechanism for tracking delay using virtual queues}
		\label{vq}
	\end{figure}
	To develop the model, we consider a generalized class $c$ with requirements $(d_c,b_c)$. The model can be easily extended for multiple classes by varying $c$ and introducing consequent variables for the same. We consider that time is slotted with slot duration of $T_s$ (see Fig. \ref{kc}) and employ a virtual queuing scheme to track the delay requirements. We observe the system at a slot boundary. The slots are indexed in time as $t-1,t,t+1, \dots$ and so on. The number of arrivals in each slot is denoted by $A^c(t)$ as shown in Fig. \ref{kc}. We consider that the aggregated queue status at the start of a slot $t$ is given by $Q^c(t)$ and  as shown in Fig. \ref{vq}, the number of slots until which the fresh arrivals in a slot must be cleared in order to satisfy the delay requirements is (see Fig. \ref{kc}):
	\begin{equation}
	K^c = \floor{\frac{d_c-T_s}{T_s}}
	\end{equation}
	
	At any given time slot $t$, the packets can be reclassified based on the slots within which it must be cleared without violating the delay constraint. Thus, we have $K^c$ virtual queues $Q_1^c,\dots, Q_{K^c}^c$ with $Q_1$ holding the packets which must be cleared in the immediate slot, whereas $Q_{K^c}^{c}$ contains the fresh arrivals which can be cleared by $K^c$ slots. At the start of a slot say $t$, we denote the state of the $i^{th}$ queue or the number of packets that can be cleared within $i$ slots from the slot boundary by $Q_i^c(t)$.
	
	If $x^c_i$ denotes the number of packets that is cleared in slot $t$, we have the following update equations
	\begin{subequations}
		\label{stateeq}
		\begin{align}
		Q_{i-1}^c(t+1) &= Q_i^c(t)-x_i^c(t) \;\; \forall \;\; 1 < i \leq K^c \label{vq1}\\  
		Q_{K^c}^c(t+1) &= A^c(t) \label{vqi}
		\end{align}
	\end{subequations}
	where \eqref{vq1} indicates that the packets that remain uncleared at the end of $t+1^{th}$ slot from the $i^{th}$ queue have encountered a delay of $T_s$ and hence should be cleared within $i-1$ slots. \eqref{vqi} indicates that fresh arrivals in a slot can be cleared within $K^c$ slots. 
	Although in the model we consider the arrivals $A^c(t)$ to be known, the same may have to be estimated in reality. The effect of estimation errors are shown to be negligible by means of a sensitivity analysis in our results. 
	We present the basics of MPC leading to the formulated problem.
	\subsection{Model Predictive Control (MPC)}
	\label{mpc}
	MPC is used for controlling dynamic systems while satisfying a set of constraints. As discussed in Section \ref{intro}, an MPC looks ahead in time and obtains efficient far-sighted policies.
	
	An MPC problem is of the form:
	
	\begin{subequations}
		\begin{gather}
		\min J = \sum_{t=0}^{N-1} l(\mathbf{s}(t), \mathbf{a}(t)) \\
		\textrm{s.t.}  \quad \mathbf{s}(t+1) = \mathbf{A} \: \mathbf{a}(t) + \mathbf{B} \: \mathbf{s}(t) \;\; \forall t \label{mpstate} \\
		\mathbf{a(0)} = \mathbf{z}, \: \mathbf{a}(t) \in \mathbb{A} \textrm{ and } \mathbf{s}(t) \in \mathbb{S}
		\end{gather}
	\end{subequations}
	where $\mathbf{s}(t)$ are \textit{state variables} which are controlled by \textit{action variables}  \textit{$\mathbf{a}(t)$}. The function $l(\cdot)$ is a loss function which is to be minimized over the action variables $\mathbf{a}(t)$. The time index $t$ are the discrete time intervals over which the system is observed. $\mathbf{a(0)} = \mathbf{z}$ represents the initial action at the start of observations. $N$ is the \textit{horizon} or future time steps until which the system is being observed. Ideally the horizon should be infinite, however, for computational tractability, a finite horizon is usually implemented. The matrices $\mathbf{A}$ and $\mathbf{B}$ describe the state equation which essentially shows how the future state depends on the current states and actions. $\mathbb{A}$ denotes the set of feasible actions and $\mathbb{S}$ denotes the set of feasible states. In a typical MPC problem, all the associated sets are assumed to be convex and the loss function to be a convex function for ease of implementation. The optimization is solved at each time step to obtain the actions which generate new states by following the state equation.

	\section{Problem Formulation}
	
	\label{probdef}
	As discussed in Section \ref{intro}, optimal far-sighted policy design for delay-constrained services require MDP based formulations involving stochastic state-action transition matrices which become computationally intractable for large state and action spaces. The aforementioned delay-tracking mechanism shows that the queue states and corresponding allocations can be easily cast into linear deterministic state equations \ref{vqi} and \ref{vq1}. This allows us to cast the resource allocation problem as a MPC discussed in \ref{mpc}. We define the states, actions/policies, time step, horizon and all specifics of an MPC problem to fit our system model.
	
	We use the following notation scheme: the subscript denotes the index of the queue under consideration while the superscript denotes the index of the corresponding class. In this section, we still proceed with single class case as the multi-class can be simply presented as an extension of the same. However, to emphasize the notion of classes, the superscript $c$ has not been omitted.
\subsection{Formulating Resource Allocation as MPC}
	\textbf{State Variables:} We consider the queue status as the state variable $Q_i^c(t)$. The state equations for the same have already been presented in \eqref{stateeq}.
	
	\textbf{Actions Variables:} The allocations are the action variables $x_i^c(t)$. It is important to note that the allocations are in terms of packets or bits and hence are non-negative integers $x_i^c(t) \in \mathbb{Z^+}$. This restriction prohibits our problem to be a conventional convex optimization problem. 
	
	\textbf{Time step:} We represent each time step by $T_s$. Since in PON, the polling scheme requires at least a round trip time, we consider $T_s$ to be at least equal to the maximum round trip time from the fog node.
	\begin{equation}
	T_s \geq \max\{rtt^f_i  \;\; \forall i\}\label{minstep}
	\end{equation}
	
	\textbf{Horizon:} The horizon is the number of future time steps considered for performing the optimization. We denote the horizon by $H$. The case for $H = 0$ is the short-sighted version of the problem. Increasing the horizon implies looking farther ahead in the future at the cost of computational complexity.

	\textbf{Objective Function:} In order to optimize throughput in PON, the sum throughput is maximized. The sum throughput is given by:
	\begin{equation}
	\max_{x_{i}^c(t) }  \;\; \sum_{t=0}^{H}\sum_{i = 1}^{K^c} x_i^c(t)
	\end{equation}
	
	It is important to note that our delay-tracking mechanism allows us to guarantee delay constraints. It is evident that the first queue ($Q_1^c(0)$) for the first time slot (indexed by $0$), must be cleared unless the number of packets in the queue exceeds the maximum number of packets that can be cleared within a slot. If $\Lambda$ (refer Table \ref{tab} denotes the maximum number of packets that can be cleared within $T_s$, then 
	\begin{equation}
	x_1^c(0) = \min\{Q_1^c(0),\Lambda\} \;\forall \;t
	\end{equation}
	This ensures that the packets which are most critical and are about to violate delay constraints are cleared, thereby avoiding delay violation. Although we may fix the allocation $x_1^c(0)$ from the queue $Q_1^c(0)$ , the same cannot be done for the future allocations. This is because the future allocations depend on future queue states that evolve using the state equations. Thus, the sum throughput is computed for all $i$ and $c$ except for $x_1^c(0)$. Effectively, the objective is:
	\begin{equation}
	\max_{x_{i}^c(t) }  \;\; \sum_{t=0}^{H}\sum_{i = 1}^{K^c} x_i^c(t) - x_1^c(0)
	\label{mpcobj}
	\end{equation}
	
	This objective function indicates that the MPC would try and maximize the throughput while minimizing the allocation for $x_1^c(0)$. This implies that other queues would be emptied as much as possible so that the first queue is left with minimum number of packets.
	
	\textbf{Constraints:} Since in MPC, we look ahead in time, the state evolving equations described in \eqref{vqi} come into play. In addition, the maximum number of packets that can be cleared in a slot is bounded by $\Lambda$ for any slot $t$. Thus, $\forall t$, we have:
	\begin{align}
	\sum_{i = 1}^{K^c} x_i^c(t)\leq \Lambda  \;\;\forall \;t
	\label{bwcon1}
	\end{align}
	
	For a given class $c$, the corresponding bandwidth constraint $b_c$ should not be violated over time. Since, we look ahead in time, the following constraint should be implemented.
	\begin{equation}
	\sum_{t=0}^{H}\sum_{i = 1}^{K^c} x_i^c(t)\leq \Lambda^c
	\label{bwcon2}
	\end{equation}
	where $\Lambda^c$ denotes the number of packets that the service provider agrees to serve over a horizon $H$.
	
	
\subsection{Far-Sighted Constrained Resource Allocation using MPC}
In case of multiple classes, the objective \eqref{mpcobj} is summed over all $c\in C$. The state equations \eqref{stateeq} and constraints \eqref{bwcon2} are enforced for each class $c$. However, for constraint \eqref{bwcon1}, we would have a sum over all $c\in C$, since the total allocation over all classes in a slot cannot exceed the available bandwidth. The multi-class MPC problem can then be stated as:

	\begin{subequations} \label{mpcprob2}
		\begin{gather}
		\max_{x_{i}^c(t)} \;\; \sum_{c\in C}\sum_{t=0}^{H}\sum_{i = 1}^{K^c} x_i^c(t) - x_1^c(0) \quad \label{obj3} \\
		\textrm{s.t.} \;\; Q_{i-1}^c(t+1) = Q_{i}^c(t)-x_i^c(t) \; \forall c,i\; 1 < i \leq K^c \label{state111} \\
		Q_{K^c}(t+1) = A^c(t), \: \forall t,c 	\label{state211} \\
		x_i^c(t) \leq Q_i^c(t), \: \forall t,c \label{con111} \\
		\sum_{c\in C}\sum_{i = 1}^{K^c} x_i^c(t) \leq \Lambda, \forall \: t \label{con211} \\
		\sum_{t=0}^{H}\sum_{i = 1}^{K^c} x_i^c(t)\leq \Lambda^c, \: \forall \: c \label{con311}\\ 
		x_1^c(0) = \min\{\Lambda,Q_1^c(0)\}, \: \forall \:c \\
		x_i^c(t)\in \mathbb{N}\cup{0}  ~\forall t,c,i,~ 1 < i \leq K^c 
		\end{gather}
	\end{subequations}
	We now present an efficient solution to the MPC problem which would be beneficial for real-time systems.
	\begin{figure}
	\centering
		\includegraphics[width=0.95\linewidth]{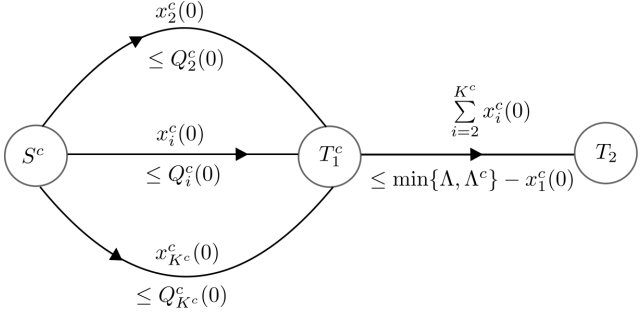}
		\caption{Shows the max-flow equivalent of the optimization}
		\label{proof1}
	\end{figure}
	
	\section{Methodology and Theoretical Analysis}
    \label{solmethod}
	The solution to the MPC problem \eqref{mpcprob2} requires solving an ILP due to integral restrictions of the variables $x_c^i(t)$. To this purpose, we show two important aspects of the problem. First, we identify that the short-sighted version of the problem with $H=0$ can be cast as a max-flow problem \cite{papadimitriou1998combinatorial}. This is useful for implementations in PON without any prediction. Next, we show that the ILP can be solved by its equivalent linear program that is obtained by relaxing the integral constraints on the variables. Thus, for solving \eqref{mpcprob2}, we have the following two propositions:
	
	\begin{proposition}
		The optimization in \eqref{mpcprob2} can be cast as a max-flow problem for $H = 0$ and can be solved in $\mathcal{O}(|C|\sum_cK^c)$. This is the optimal strategy for bandwidth allocation without prediction.
	\end{proposition}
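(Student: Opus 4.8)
\emph{Proof proposal.} The plan is to specialise \eqref{mpcprob2} to the single slot $H=0$, observe that the resulting program carries no temporal coupling and is a pure packing problem, and realise it as a maximum flow on the layered network of Fig.~\ref{proof1}; integrality of max-flow then removes the integer restriction at no extra cost, and an edge count delivers the stated running time.

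\emph{Step 1 (reduce the program).} For $H=0$ the only variables are $\{x_i^c(0)\}$. Since $x_1^c(0)=\min\{\Lambda,Q_1^c(0)\}$ is fixed, the term $-x_1^c(0)$ in \eqref{obj3} cancels the $i=1$ summand, so the effective objective is $\max\sum_{c\in C}\sum_{i=2}^{K^c}x_i^c(0)$. The update equations \eqref{state111}--\eqref{state211} merely \emph{define} the slot-$1$ states, which appear nowhere else when $H=0$, and hence impose no restriction and may be dropped. The active constraints are thus the occupancy bounds $x_i^c(0)\le Q_i^c(0)$ of \eqref{con111}, the global cap \eqref{con211}, and the per-class caps \eqref{con311}, all at $t=0$. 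Folding the fixed $x_1^c(0)$ onto the right-hand sides yields residual caps $\widetilde{\Lambda}^c:=\Lambda^c-x_1^c(0)$ and $\widetilde{\Lambda}:=\Lambda-\sum_{c}x_1^c(0)$.

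\emph{Step 2 (build and verify the network).} Following Fig.~\ref{proof1}, introduce a source $s$, a sink $d$, a class node for each $c$, a queue node $(i,c)$ for each $i\ge 2$, and one bottleneck node $W$. Use arcs $s\to(i,c)$ of capacity $Q_i^c(0)$, arcs $(i,c)\to c$ of capacity $Q_i^c(0)$, arcs $c\to W$ of capacity $\widetilde{\Lambda}^c$, and a single arc $W\to d$ of capacity $\widetilde{\Lambda}$. Identifying the flow on $s\to(i,c)$ with $x_i^c(0)$ gives a bijection between feasible integer allocations and integer $s$--$d$ flows: the source arc encodes \eqref{con111}, conservation at $c$ together with the capacity of $c\to W$ encodes the residual form of \eqref{con311}, and the single arc $W\to d$ encodes the residual form of \eqref{con211}. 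The value of any such flow equals $\sum_{c}\sum_{i\ge 2}x_i^c(0)$, so a maximum flow attains the optimum of the reduced program. As all capacities are integral, the integral-flow theorem guarantees an integer maximiser, so the max-flow solution \emph{is} an optimal integer allocation and no separate ILP solve is needed; since $H=0$ is precisely the no-look-ahead regime, this establishes optimality without prediction.

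\emph{Step 3 (complexity), the main obstacle.} The network has $O(\sum_c K^c)$ arcs and nodes. Because every queue funnels into a single class node and all class nodes funnel through the lone bottleneck $W$, an augmenting-path routine needs only $O(|C|)$ rounds before either each arc $c\to W$ or the global arc $W\to d$ saturates, each round costing an $O(\sum_c K^c)$ traversal; this gives the claimed $\mathcal{O}(|C|\sum_c K^c)$. The delicate point I would have to argue carefully---rather than settling for a generic $O(VE)$ bound---is that the acyclic funnel structure prevents a saturated class or global arc from ever needing re-examination through residual back-arcs, so that the standard max-flow routine of \cite{papadimitriou1998combinatorial} indeed terminates within $O(|C|)$ augmentations.
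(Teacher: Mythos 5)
Your Steps 1 and 2 coincide with the paper's own proof: the same specialization to the $t=0$ packing problem (the paper's \eqref{myo1}) and the same funnel-shaped flow network of Fig.~\ref{proof1}, with max-flow integrality disposing of the integer restriction. The one structural difference is that you give each variable $x_i^c(0)$ its own intermediate node $(i,c)$, whereas the paper encodes each variable as one of the parallel edges of a multigraph with three vertices per class (plus the shared terminals in the multi-class case). That difference is exactly what breaks your Step 3. The paper's graph has $V=\mathcal{O}(|C|)$ vertices and $E=\mathcal{O}(\sum_c K^c)$ edges, so the claimed complexity follows immediately from the generic $\mathcal{O}(VE)$ max-flow bound \cite{orlin2013max}; your graph has $V=\Theta(\sum_c K^c)$ vertices, so the same generic bound only gives $\mathcal{O}\bigl((\sum_c K^c)^2\bigr)$, which forces you into the bespoke augmentation-counting argument that you yourself flag as not fully established.

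That counting argument is false as stated: the number of augmentations is not $\mathcal{O}(|C|)$. Take $|C|=1$ and $\widetilde{\Lambda},\widetilde{\Lambda}^1$ both larger than $\sum_{i\geq 2}Q_i^1(0)$. Then at optimality every source arc $s\to(i,1)$ with $Q_i^1(0)>0$ must be saturated, and since each augmenting path contains exactly one source arc and carries at most that arc's capacity, any augmenting-path routine performs one augmentation per nonempty queue, i.e.\ up to $K^1-1$ of them rather than $\mathcal{O}(1)$. What is true, and what you would need to argue instead, is a bound on total work rather than on the number of augmentations: every augmenting path has constant length, every augmentation saturates at least one arc, and in this funnel DAG no simple residual path can use a back-arc (any back-arc out of $c$ or $W$ reaches a node all of whose forward arcs return immediately to the node just left), so saturated arcs never carry flow backwards and a DFS-based scheme runs in $\mathcal{O}(E)=\mathcal{O}(\sum_c K^c)$ total, which is even stronger than the claimed bound. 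Alternatively, and most simply: your two equal-capacity series arcs $s\to(i,c)\to c$ are redundant, so merge them into a single parallel edge $s\to c$ of capacity $Q_i^c(0)$. This recovers exactly the paper's compact multigraph, after which the cited $\mathcal{O}(VE)$ bound delivers $\mathcal{O}(|C|\sum_c K^c)$ with no further argument.
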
	
	\begin{proof}
        The short-sighted version of \eqref{mpcprob2} is obtained for $H=0$. We first show this for a single class $c$ and can be easily extended for multi-class scenarios.
        \begin{subequations} \label{myo1}
		\begin{gather}
		\max \; \sum_{i = 2}^K x_i^c(0) \label{onj}\\
		\textrm{s.t.} \;\; 
		x_i^c(0) \leq Q_i^c(0) \label{con1} \\
		\sum_{i = 2}^{K^c} x_i^c(0)\leq \min\{\Lambda,\Lambda^c\} - x_1^c(0)
		\label{con2} \\
		x_i^c(0)\in \mathbb{N}\cup \{0\}
		\end{gather}
	    \end{subequations}

		Consider a graph $G$ with a source node $S^c$ and two terminal nodes $T^c_1$ and $T^c_2$ respectively as shown in Fig. \ref{proof1}. Clearly, the constraints in \eqref{myo1} can be seen as the capacities of the links and the maximum flow from the source $S$ to terminal $T^c_2$ solves the required optimization. It is known that the complexity of solving max-flow problem for a graph with $V$ vertices and $E$ edges is $\mathcal{O}(VE)$ \cite{orlin2013max}.The graph consists of $3$ vertices and $K^c-1$ edges. Hence, the complexity of solving the short-sighted case is $\mathcal{O}(K^c)$. In case of multi-class, we will have $|C|$ such graphs corresponding to each class. The source $S^c$ for each class is connected to a single source node $S$. $T_2^c$ is replaced by a single terminal $T_2$ where all links from $T_1^c$ for each class terminate. Finally a terminal node $T$ is required to implement \eqref{con211} where the link from $T_2$ to $T$ has the capacity $\Lambda - \sum_c x_1^c(0)$. Thus the multi-class short-sighted allocation can be solved in $\mathcal{O}(|C|\sum_cK^c)$
	\end{proof}
	\begin{proposition}
		The MPC problem \eqref{mpcprob2} can be solved by its equivalent linear program obtained by relaxing the integral constraints.
	\end{proposition}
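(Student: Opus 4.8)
The plan is to prove that the constraint matrix of the LP relaxation of \eqref{mpcprob2} is totally unimodular and its right-hand side is integral, so that every vertex of the feasible polytope is integral and the LP optimum is automatically optimal for the ILP. The most transparent route, which also subsumes Proposition~1 as the special case $H=0$, is to exhibit \eqref{mpcprob2} as a single-commodity max-flow problem on an enlarged network and invoke total unimodularity of the node--arc incidence matrix. First I would eliminate the state variables: the equalities \eqref{state111}--\eqref{state211} express each occupancy $Q_i^c(t)$ as an initial content $Q_i^c(0)$ (or an arrival $A^c(\tau)$) minus the allocations already drawn from the same \emph{cohort} of packets (the packets that entered the system together). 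Substituting into \eqref{con111} turns each delay constraint into a prefix-sum bound stating that the cumulative allocation drawn from a cohort up to any slot cannot exceed the cohort size. Because these prefix sums are nondecreasing in the slot index and every $x_i^c(t)\ge 0$, the whole family of prefix bounds for a cohort collapses to the single bound ``total allocation from that cohort $\le$ cohort size,'' which is the key simplification that lets each cohort be represented by a single capacitated node.

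Next I would build the flow network. I introduce a source $S$; one node per class $c$ whose incoming arc carries capacity $\Lambda^c$, realizing \eqref{con311}; one node per cohort of class $c$ with capacity equal to its size, $A^c(\tau)$ or $Q_i^c(0)$, realizing the reduced delay/availability bound; one node per slot $t$ with capacity $\Lambda$, realizing \eqref{con211}; and a sink $T$. A cohort node is joined to exactly the slot nodes lying inside its admissible deadline window, so the arc from cohort $(c,\tau)$ to slot $t$ carries precisely the allocation variable $x_i^c(t)$ associated with that cohort at that slot, and node capacities are transferred onto arcs by the standard node-splitting construction. Under this bijection every feasible flow is a feasible allocation and conversely, and the value of the flow equals $\sum_{c,t,i}x_i^c(t)$. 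Fixing $x_1^c(0)=\min\{\Lambda,Q_1^c(0)\}$ is absorbed exactly as in Proposition~1, by reducing the residual capacity of slot $0$ to $\Lambda-\sum_c x_1^c(0)$ and dropping that term from the objective; hence maximizing \eqref{obj3} is maximizing the $S$--$T$ flow.

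I would then close the argument by total unimodularity. The node--arc incidence matrix of a directed graph is totally unimodular, and node-splitting preserves this, so the constraint matrix of the resulting max-flow LP is totally unimodular; since all capacities $\Lambda$, $\Lambda^c$, $A^c(\tau)$ and $Q_i^c(0)$ are integers, the right-hand side is integral. By the standard theorem on integral polyhedra, every vertex of the feasible region is integral, and because an LP attains its optimum at a vertex, the relaxed LP admits an integral optimal solution, which therefore solves the ILP \eqref{mpcprob2} exactly.

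The step I expect to be the main obstacle is showing that the per-slot constraint \eqref{con211} and the per-class constraint \eqref{con311} can be enforced simultaneously within a single-commodity flow: each served packet must pass through both its class gate and its slot gate, so I must order these gates in series along the path $S\to\text{class}\to\text{cohort}\to\text{slot}\to T$ so that the two capacities act on disjoint arcs and the incidence structure remains that of a genuine flow network. Should that serial construction fail because of some coupling I have overlooked, the fallback is to verify total unimodularity of the reduced constraint matrix directly---each column being an interval in the cohort block together with one entry in the slot block and one in the class block---via the Ghouila-Houri row-partition criterion.
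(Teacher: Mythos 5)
Your proof is correct, but it takes a genuinely different route from the paper's. The paper proves total unimodularity \emph{directly}: it recursively substitutes the state equations \eqref{state111}--\eqref{state211} into \eqref{con111} to get variable-free prefix-sum bounds, writes out the resulting constraint matrix explicitly, and verifies the Ghouila--Houri criterion by hand-constructing the row partition $R_1,R_2$ for every subset of rows (with a separate patching argument for the multi-class case, where a naive partition can produce an entry of $2$). You instead make a structural observation the paper never uses: since all prefix sums of a cohort's allocations are bounded by the \emph{same} constant (the cohort size) and the variables are nonnegative, the entire nested family of prefix constraints collapses to the single constraint ``total allocation from the cohort $\le$ cohort size.'' After this collapse each variable lies in exactly one cohort, one slot, and one class, and because cohorts are nested inside classes (the families are laminar, not crossing), the three capacity gates can be placed in series along $S\to\text{class}\to\text{cohort}\to\text{slot}\to T$; integrality then follows from TU of the node--arc incidence matrix rather than from a bespoke Ghouila--Houri computation. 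This buys several things the paper's proof does not: it unifies Propositions 1 and 2 (the paper's max-flow construction is exactly your network at $H=0$), it handles multi-class with no extra argument, and it yields strongly polynomial combinatorial algorithms (max-flow) rather than generic LP solving. What the paper's approach buys is self-containedness on the raw constraint matrix, without needing to argue that the collapsed polytope coincides with the original relaxation. One bookkeeping detail you should state explicitly: when the fixed allocations $x_1^c(0)=\min\{\Lambda,Q_1^c(0)\}$ are absorbed, you must decrement not only the slot-$0$ capacity to $\Lambda-\sum_c x_1^c(0)$ but also each class capacity to $\Lambda^c-x_1^c(0)$, since \eqref{con311} includes the fixed term; this is what the $\min\{\Lambda,\Lambda^c\}-x_1^c(0)$ bound accomplishes in Proposition 1, and it is a two-line fix, not a gap in the argument.
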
	
    \begin{proof}
    To show this, we use an established result in the field of integer linear programming which provides a sufficient condition for an ILP to yield same solution as its linear program version \cite{bertsekas1997nonlinear}. A linear program of the form $\max\{\mathbf{cx}|\mathbf{Ax}\leq \mathbf{b}\}$ has integral solutions if the matrix $\mathbf{A}$ satisfies a special property called total unimodularity (TU), provided that $\mathbf{b}$ is integral. We show that the constraint matrix formed in our problem is TU. Since all bounds on our variables ($Q_i^c(t),\Lambda,\Lambda^c$) are integers, the corresponding LP yields integral solutions.  The details of the proof and relevant results required for the same provided in Appendix of the paper.
    \end{proof}

\textbf{Generic appeal of the solution:} We briefly mention some salient features of our solution technique. It is evident that the MPC problem of \eqref{mpcprob2} can be cast into an LP equivalent if the constraints are formed by adding the flow variables. Thus, any constraint of the form $\sum x_c^i(t)\lesseqgtr\alpha$ where $\alpha$ is a known constant may be realized. The sum can be carried over any combination of the variables $c$, $i$ or $t$. Additionally one might be interested in considering allocations for each user separately instead of the aggregated demand considered here. The corresponding optimizer would then store each user request separately and introduce more granularity in defining the variables $x^c_i$ as $x^c_{i,u}$ for an user $u$. Even then, the optimization can be performed by considering constraints for each user. Hence, the applied solution technique is quite powerful and allows efficient implementation for a large class of bandwidth and delay-constrained problems. 

	\section{Implementation in PON framework}
	\label{impl}
	We discuss the details of implementing the obtained allocation policies by solving \eqref{mpcprob2}.

	\begin{figure*}[t]
		\centering
		\includegraphics[scale=0.45]{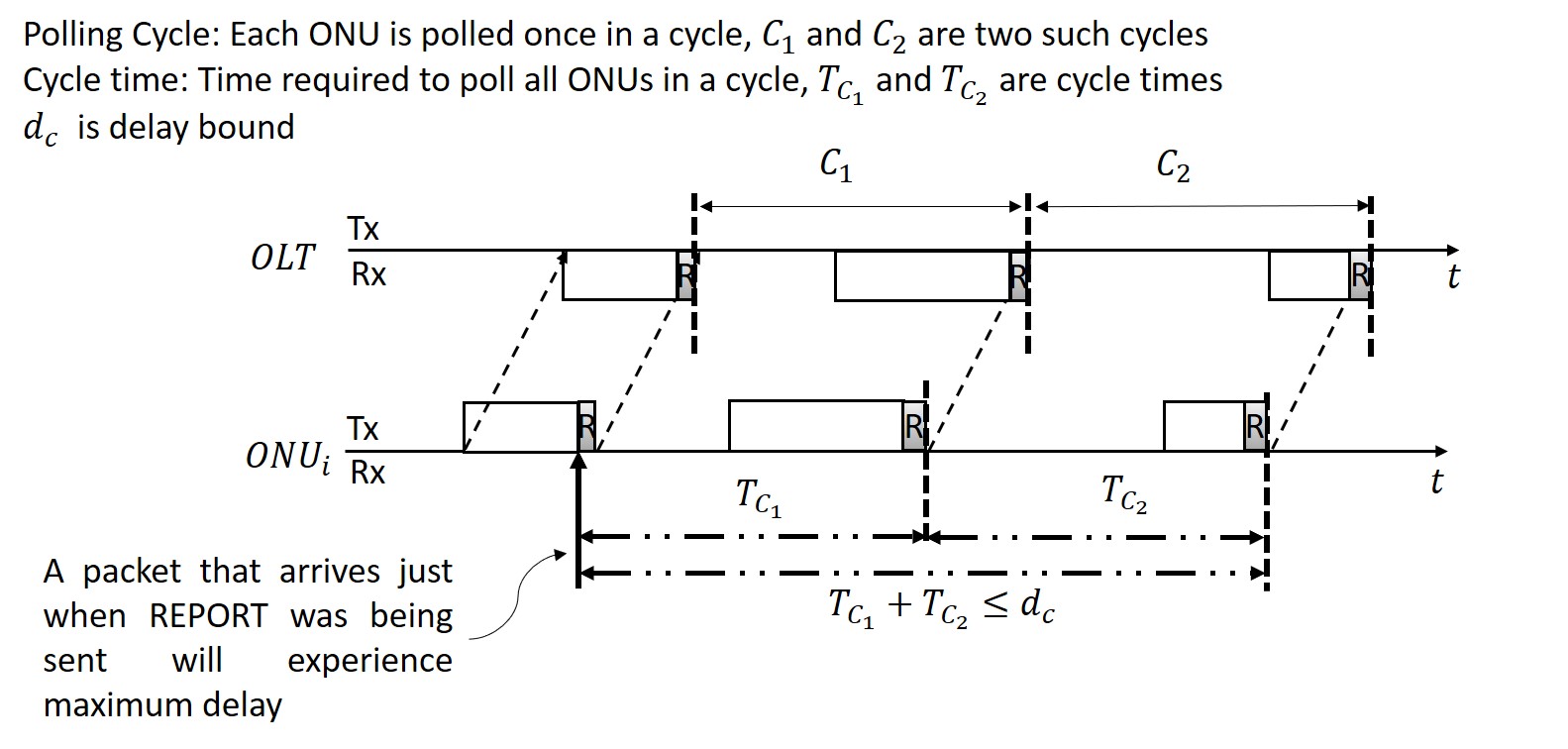}
		\caption{Illustration of Constraint from dynamic bandwidth allocation protocol}
		\label{fig:dba}
	\end{figure*}
	\subsection{Dynamic Bandwidth Allocation at fog node}
	\label{dbaproto}
	In order to implement the resource allocation scheme obtained from MPC at a fog Node in PON, the FN should be facilitated with an access protocol. As mentioned here, the fog node can implement a polling based scheme similar to the OLT. In this regard, the specifics of such a protocol are to be decided. We consider that the fog node can poll the ONUs with help of control messages similar to GATE and REPORT as per Multi Point Control Protocol (MPCP) \cite{ieeestd}. Each ONU sends a REPORT message containing information regarding its queue status for each class of service. The fog node collects this information from all ONUs and obtains the resource allocation scheme by solving \eqref{mpcprob2}. It is to be noted that the solution of \eqref{mpcprob2} provides the aggregated (combined for all ONUs) number of packets/bits to be cleared for each class. Thus, the obtained solution is to be distributed among the ONUs. Since the requests for each ONU is known, a max-min fair allocation (also known as excess distribution) \cite{le2005rate} of the aggregated amount is granted to each ONU for each class. The information regarding the granted amount is sent via a GATE message. The specific details of the cycle time, slot duration considered for optimization and horizon for optimization are discussed below:
	
	\subsection{Constraint on Cycle Time}
	
	\label{cycle}
	A cycle is a duration of time in which all ONUs are served once as shown in Fig.  \ref{fig:dba}. After the data transmission, each ONU generates and appends a REPORT message which contains information regarding the current buffer status for each class. It is evident that the cycle time is indeed dependent on the amount of data that the ONUs are allowed to clear on receiving the GATE message. Usually heuristic allocation strategies implement a limit on the amount of granted bandwidth for each ONU (otherwise a malicious ONU might occupy the entire bandwidth) which effectively limits the cycle time. Let us denote the maximum allowable grant for each ONU by $T_{mg}$ in seconds.  Then, the maximum amount of time within which all ONUs are served once, is given by $T_m = N T_{mg}$ (neglecting the GATE, REPORT and guard durations). It is easy to see that given a delay constraint say $d_c$, it is not possible for cycle time $T_m$ to be arbitrarily large. To estimate the limit on the cycle time. we observe the worst possible delay that can be encountered by a packet. A packet that arrives just after the data has been transmitted has to wait for at least a cycle to get reported. Then, if required (depending on the delay constraint) this packet can be cleared in the subsequent cycle. This implies that the maximum allowable cycle time cannot be more than $\frac{d_c}{2}$. Otherwise, two consecutive cycles might be more than $\frac{d_c}{2}$ which would lead to a delay violation for the worst packet as described here (see Fig. \ref{fig:dba}). Hence,
	\begin{equation}
	T_m \leq \frac{d_c}{2}
	\label{cyl}
	\end{equation}
	\subsection{Optimal Cycle time and choice of Time slot duration}
	The optimization of \eqref{mpcprob2}, is solved at end of each time slot. As evident from \eqref{minstep}, the minimum duration for time slot can be $\max\{rtt_i^f \forall i\}$ while the maximum will be that of the maximum cycle time that is $\frac{d_c}{2}$. It is important to note that while defining the cycle we had ignored the guard durations and the control messages. To minimize wastage of bandwidth, it is evident that the number of these messages and guard durations must be minimal. Thus, a good choice for time slot duration should be the maximum cycle time i.e., $\frac{d_c}{2}$. This implies that AoI for IoT devices would be bounded by $d_c$.
	\begin{equation}
	T_s = \frac{d_c}{2}
	\end{equation}  
	
	
	\section{Simulation Results and Discussion}
	\label{sim}
	We compare the performance of our (MPC-based) protocol with 1) existing standards and 2) the two  protocols discussed in Section \ref{relpon}: (i) OOB protocol \cite{comparewith}  and (ii) priority protocol \cite{ou2017resource}.
We compare the performance of these protocols using three metrics: (i) percentage of violation/drop, (ii) throughput, (iii) average delay, and (iv) jitter.
	The decentralized out of band bandwidth allocation which employs separate wavelengths is referred as "OOB". The one in \cite{ou2017resource} is referred as "Priority based" and our protocol is referred as "MPC". The simulation parameters have been tabulated in Table \ref{tab2}. Note that the priority based allocation scheme can also be conceived as a heuristic based on current information or a short-sighted allocation scheme. We have simulated a self-similar traffic with 16 traffic sources attached to each ONU in OMNet++ \cite{Varga01theomnet++}. The self-similar traffic has been simulated with Hurst parameter of 0.8 \cite{IPACT} to generate background best-effort traffic. Our choice of delay and bandwidth requirements are $(d_1,b_1) = (1 \text{ ms}, 100 \text{ Mbps})$ and $(d_2,b_2) = (4 \text{ ms}, 100 \text{ Mbps})$ respectively (our model is generic and can handle any delay and bandwidth requirement unless physically infeasible).  Owing to the least delay constraint, the time slot was chosen to be $0.5$ ms. This gives $K^1 = 1$ and $K^2 = 7$. We choose a finite horizon $H = 10$. To generate the traffic corresponding to delay-constrained applications, a Hurst parameter of 0.2 was considered.
	To solve the optimization, we used MATLAB2020a and integrated OMNet++ to work seamlessly with MATLAB. 
	\begin{table}
		\begin{center}
			\caption{Simulation Parameters}
			\label{tab2}
			\begin{tabular}{|l|l|}
				\hline
				\multicolumn{1}{|c|}{\textbf{Parameter}} &  \multicolumn{1}{|c|}{\textbf{Value}} \\ 
				\hline
				Traffic Generator & Self Similar using Pareto \cite{IPACT}  \\ 
				\hline
				Hurst Parameters & 0.8 and 0.2 \cite{IPACT}  \\
				\hline
				Buffer Size for each ONU & 10 Mb  \\ 
				\hline
				Link rate at ONU side & 100 Mbps \\
				\hline
				PON link rate &  1 Gbps  \\ 
				\hline
				Number of ONUs &  16 \\
				\hline
				Traffic Sources per ONU & 16  \\ 
				\hline
				Maximum Cycle Time & 0.5 ms \\
				\hline
				Guard Duration & 5 $\mu s$ \cite{IPACT, an2014new}   \\ 
				\hline
				Distance between OLT and fog node & 15 km  \\
				\hline
				Distance between ONUs and fog node & 1-5 km \cite{comparewith}   \\ 
				\hline
				$(d_1,b1)$ & $(1 ms , 100 Mbps)$\cite{fitzek2020computing}\\
				\hline
				$(d_2,b2)$ & $(4 ms , 100 Mbps)$\cite{fitzek2020computing}\\
				\hline
				Confidence Interval & 95\% \\
				\hline
			\end{tabular}
		\end{center}
	\end{table}
	
	
	\subsection{Comparison based on Percentage of Dropped Packets}
	Here, we present two types of comparisons. First, we present the comparison based on existing standards as discussed in Section \ref{intro}. Then, we compare our proposal with some heuristics which were proposed for accommodating new applications.
	\subsubsection{Comparison with existing standards}
	\begin{figure}
	    \centering
	    \includegraphics[width=0.8\linewidth]{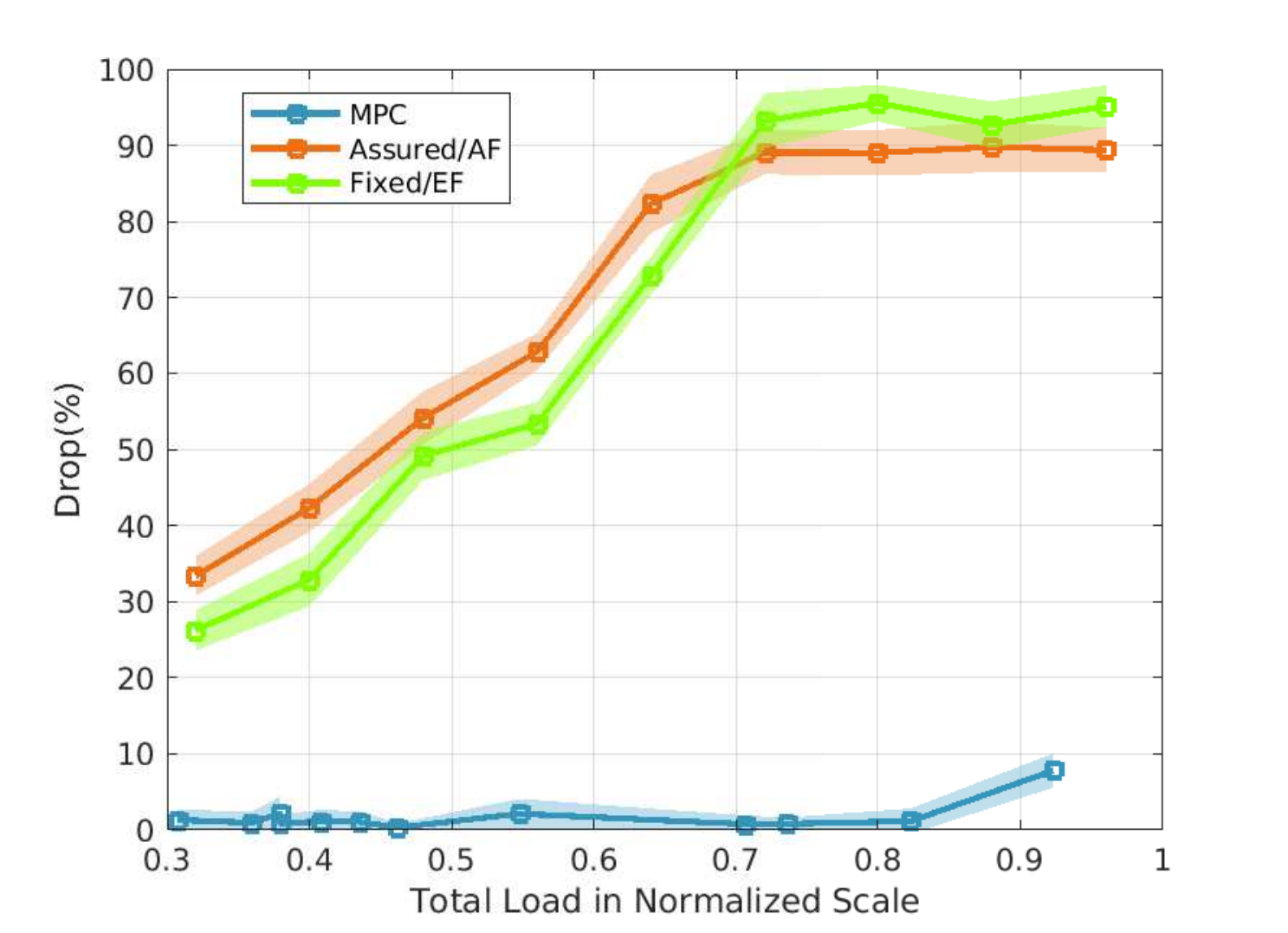}
	    \caption{Drop percentage comparison for delay constraint of 1ms with Fixed Allocation and Assured Allocation policies.}
	    \label{fig:stds}
	\end{figure}
	Fig. \ref{fig:stds} shows the performance of existing standards which do not cater to satisfying strict delay constraints for bursty traffic conditions. The drop percentage refers to those packets which exceed the delay constraint of 1ms. Evidently, MPC based proposal specifically does not allow drops unless the load crosses a threshold where satisfying the delay constraint becomes impossible due to bandwidth restrictions. On the other hand, Fixed allocations which are primarily designed for constant-bit rate applications, allocate a fixed portion of bandwdith (TDM) for each ONU. Due to the bursty nature of arrivals, this scheme is highly inefficient (specifically at low loads, as it unnecessarily reserves bandwidth which delays subsequently arriving bursts). Assured scheme shows better performance as compared to Fixed allocations only at lower loads as it gets opportunity to dynamically schedule the bursty packet arrivals. However, Assured strategy only guarantees a certain bandwidth and does not aim at maintaining delay constraints. As the load increases, the two strategies show similar performance since, the fixed and assured allocations reach their respective limits.
	\subsubsection{Comparison with existing heuristics}
	\begin{figure*}
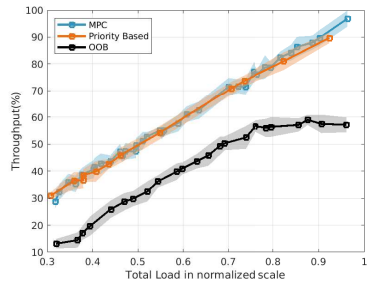
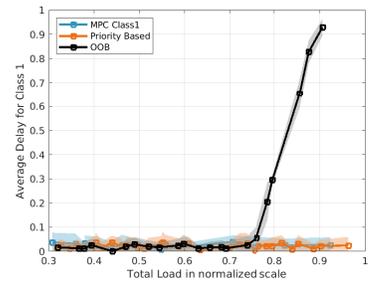
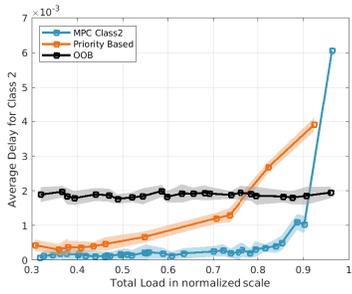
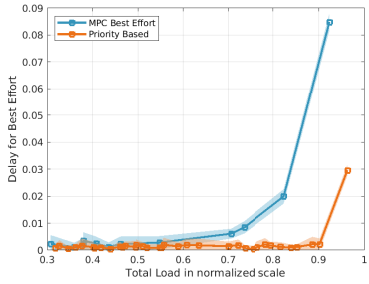
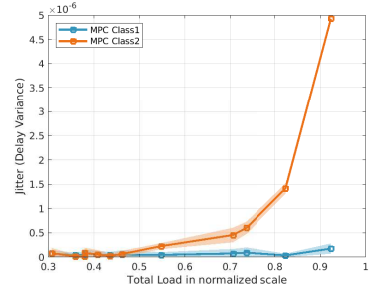

		\centering
		
		\begin{subfigure}[b]{0.3\textwidth}
			\centering
			\includegraphics[width=\textwidth]{./results/Drop011}
			\caption{Drop percentage for all classes of traffic}
			\label{fig:drop25}
		\end{subfigure}
		\hfill
		\begin{subfigure}[b]{0.3\textwidth}
			\centering
			\includegraphics[width=\textwidth]{./results/through011}
			\caption{Throughput\%}
			\label{fig:through25}
		\end{subfigure}
		\hfill
		\begin{subfigure}[b]{0.3\textwidth}
			\centering
			\includegraphics[width=\textwidth]{./results/Delay1_011}
			\caption{Average Delay for class 1}
			\label{fig:delay1_25}
		\end{subfigure}
		\hfill
		\begin{subfigure}[b]{0.3\textwidth}
			\centering
			\includegraphics[width=\textwidth]{./results/Delay2_011}
			\caption{Average Delay for class 2}
			\label{fig:delay225}
		\end{subfigure}
		\hfill
		\begin{subfigure}[b]{0.3\textwidth}
			\centering
			\includegraphics[width=\textwidth]{./results/Delaybe_011}
			\caption{Average Best Effort Delay}
			\label{fig:delaybe25}
		\end{subfigure}
		\hfill
		\begin{subfigure}[b]{0.3\textwidth}
			\centering
			\includegraphics[width=\textwidth]{./results/jitter011}
			\caption{Jitter Performance for classes 1 \& 2}
			\label{fig:jitter25}
		\end{subfigure}
		\caption{Figure shows performance in terms of Drop, Throughput, Average Delay and Jitter with Best-effort Load of $25\%$ of the maximum load. Drop and Delay of Classes are maintained at cost of Best-effort delay}
		\label{fig:25}
	\end{figure*}
	\begin{figure*}
		\centering
		\begin{subfigure}[b]{0.3\textwidth}
			\centering
			\includegraphics[width=\textwidth]{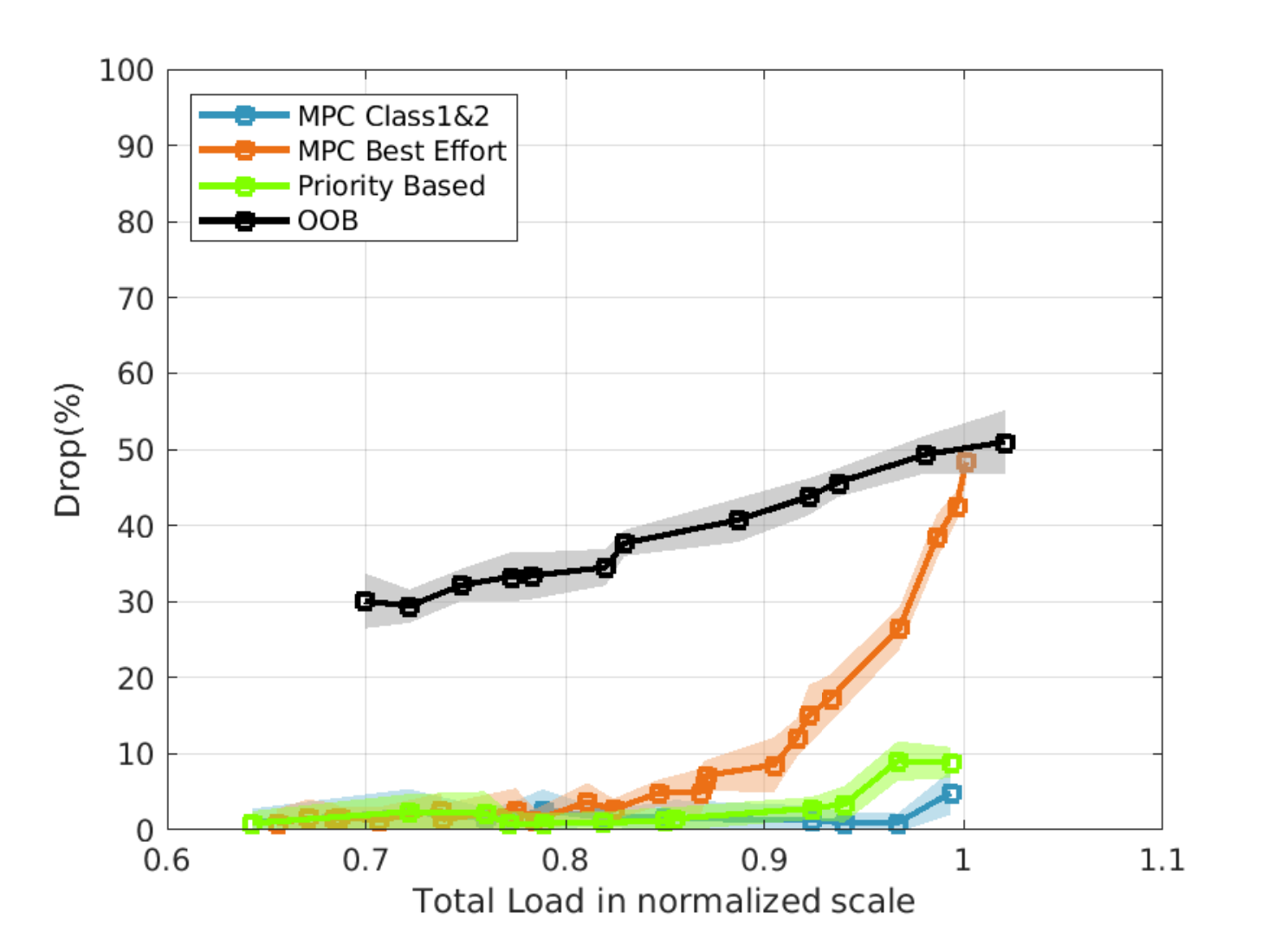}
			\caption{Drop percentage for all classes of traffic}
			\label{fig:drop50}
		\end{subfigure}
		\hfill
		\begin{subfigure}[b]{0.3\textwidth}
			\centering
			\includegraphics[width=\textwidth]{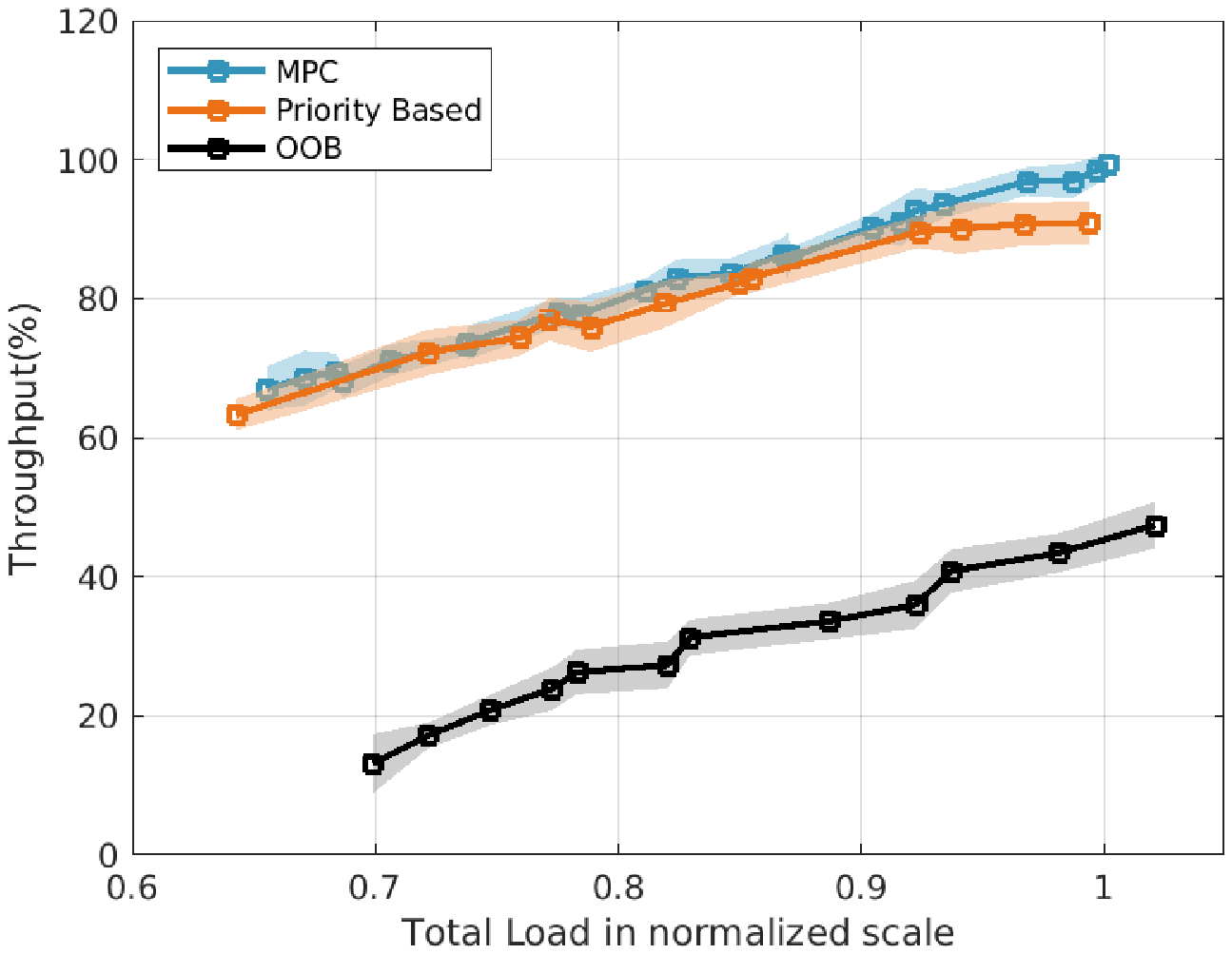}
			\caption{Throughput\%}
			\label{fig:through50}
		\end{subfigure}
		\hfill
		\begin{subfigure}[b]{0.3\textwidth}
			\centering
			\includegraphics[width=\textwidth]{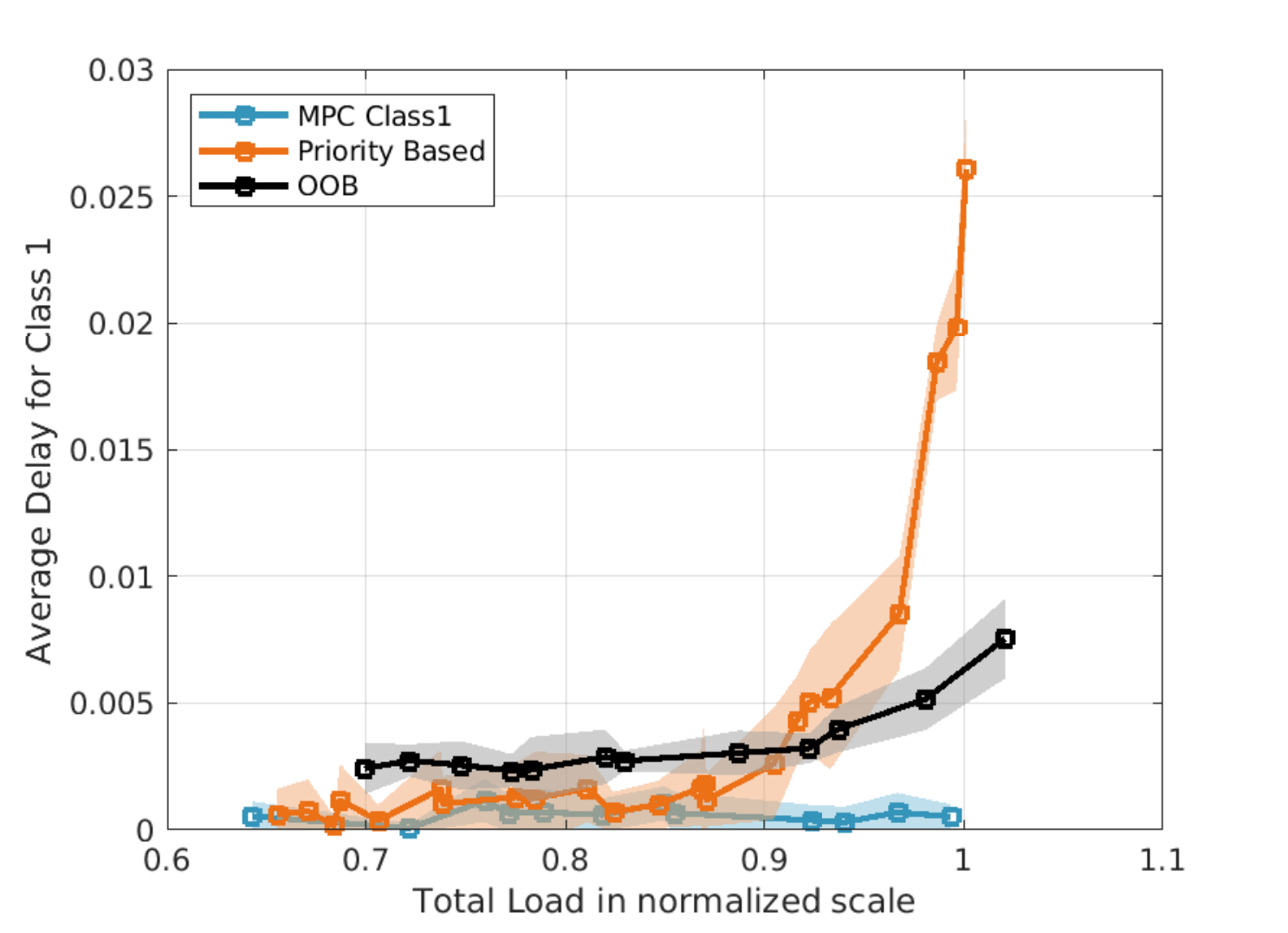}
			\caption{Average Delay for class 1}
			\label{fig:delay150}
		\end{subfigure}
		\hfill
		\begin{subfigure}[b]{0.3\textwidth}
			\centering
			\includegraphics[width=\textwidth]{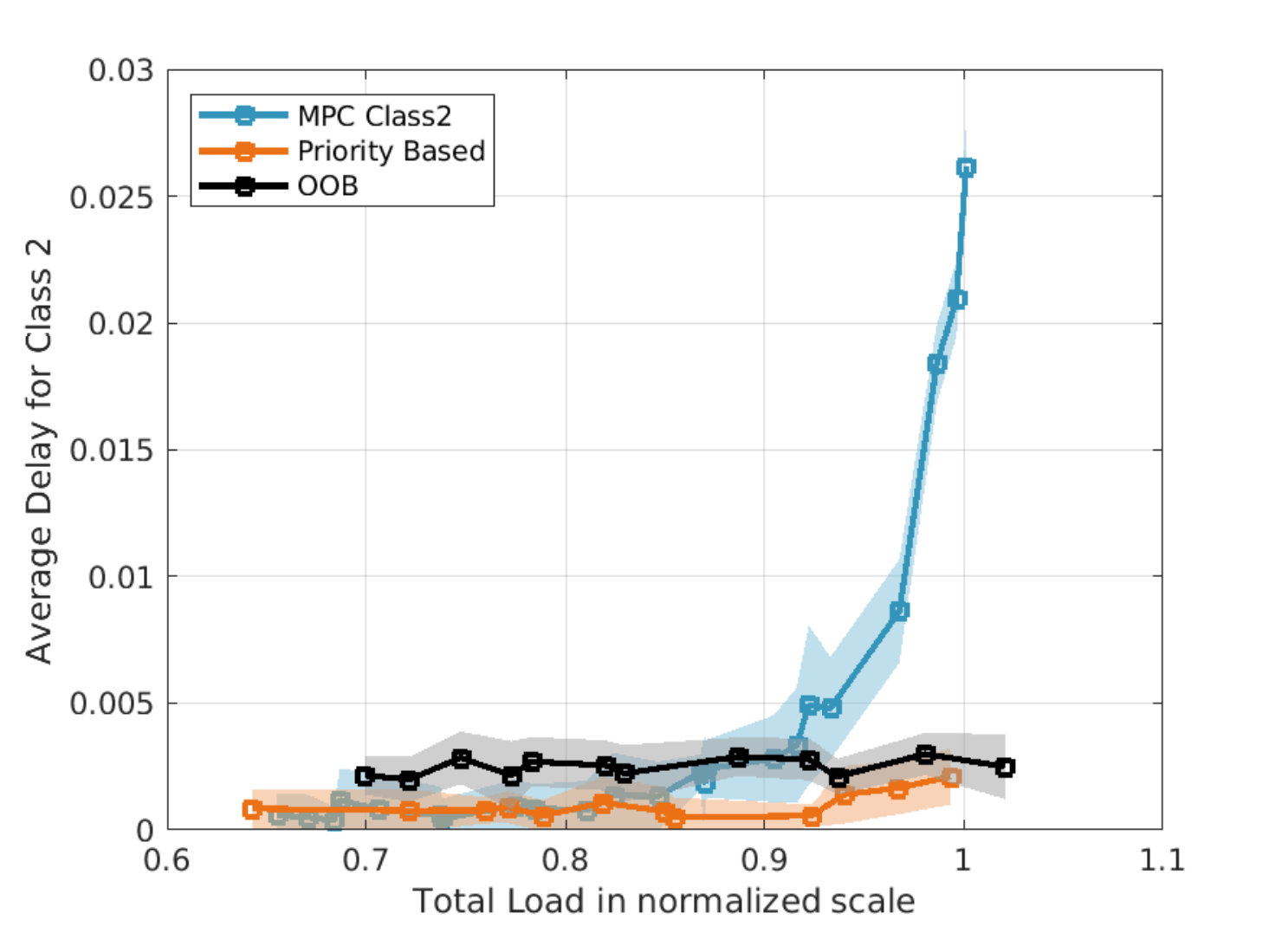}
			\caption{Average Delay for class 2}
			\label{fig:delay250}
		\end{subfigure}
		\hfill
		\begin{subfigure}[b]{0.3\textwidth}
			\centering
			\includegraphics[width=\textwidth]{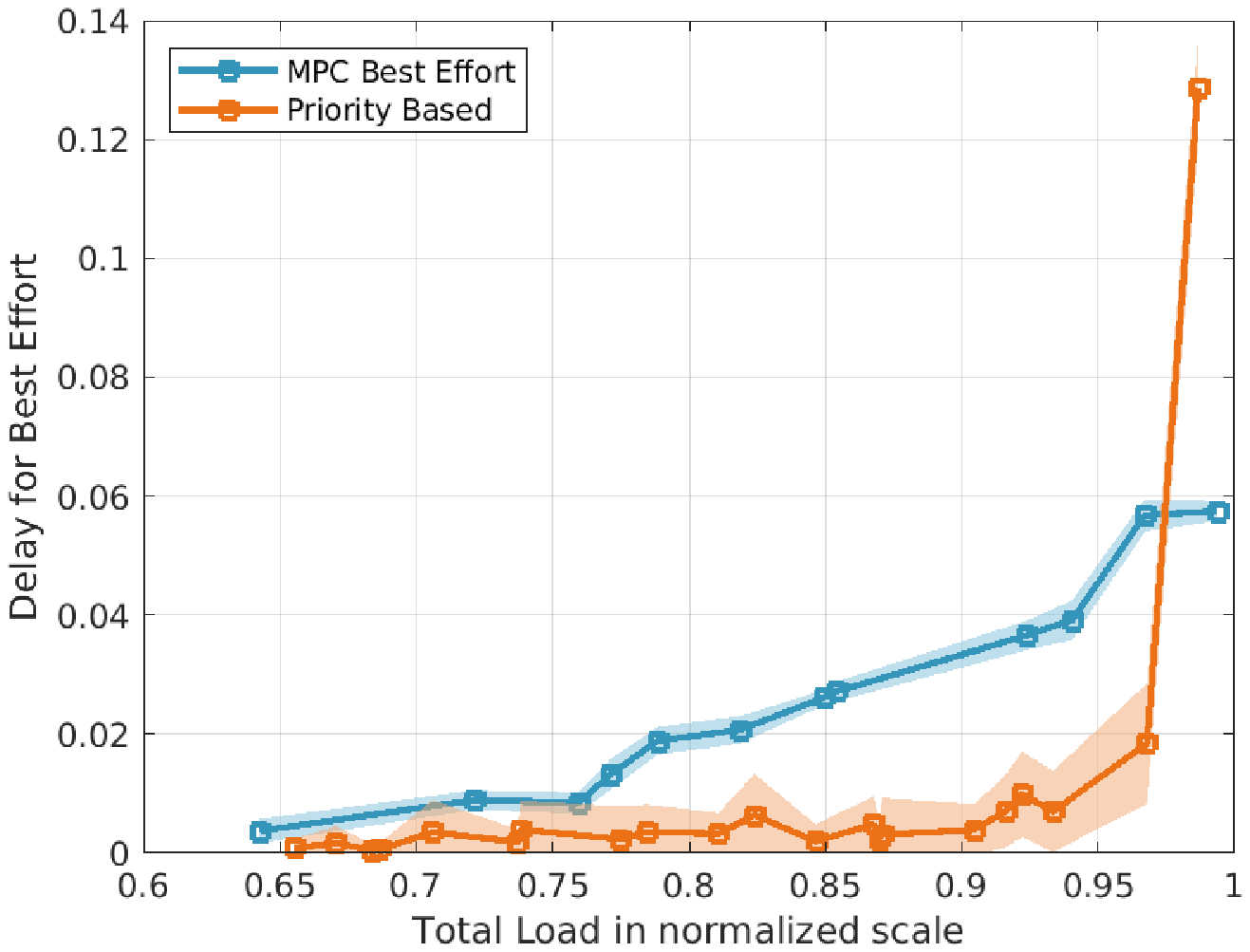}
			\caption{Average Best Effort Delay}
			\label{fig:delaybe50}
		\end{subfigure}
		\hfill
		\begin{subfigure}[b]{0.3\textwidth}
			\centering
			\includegraphics[width=\textwidth]{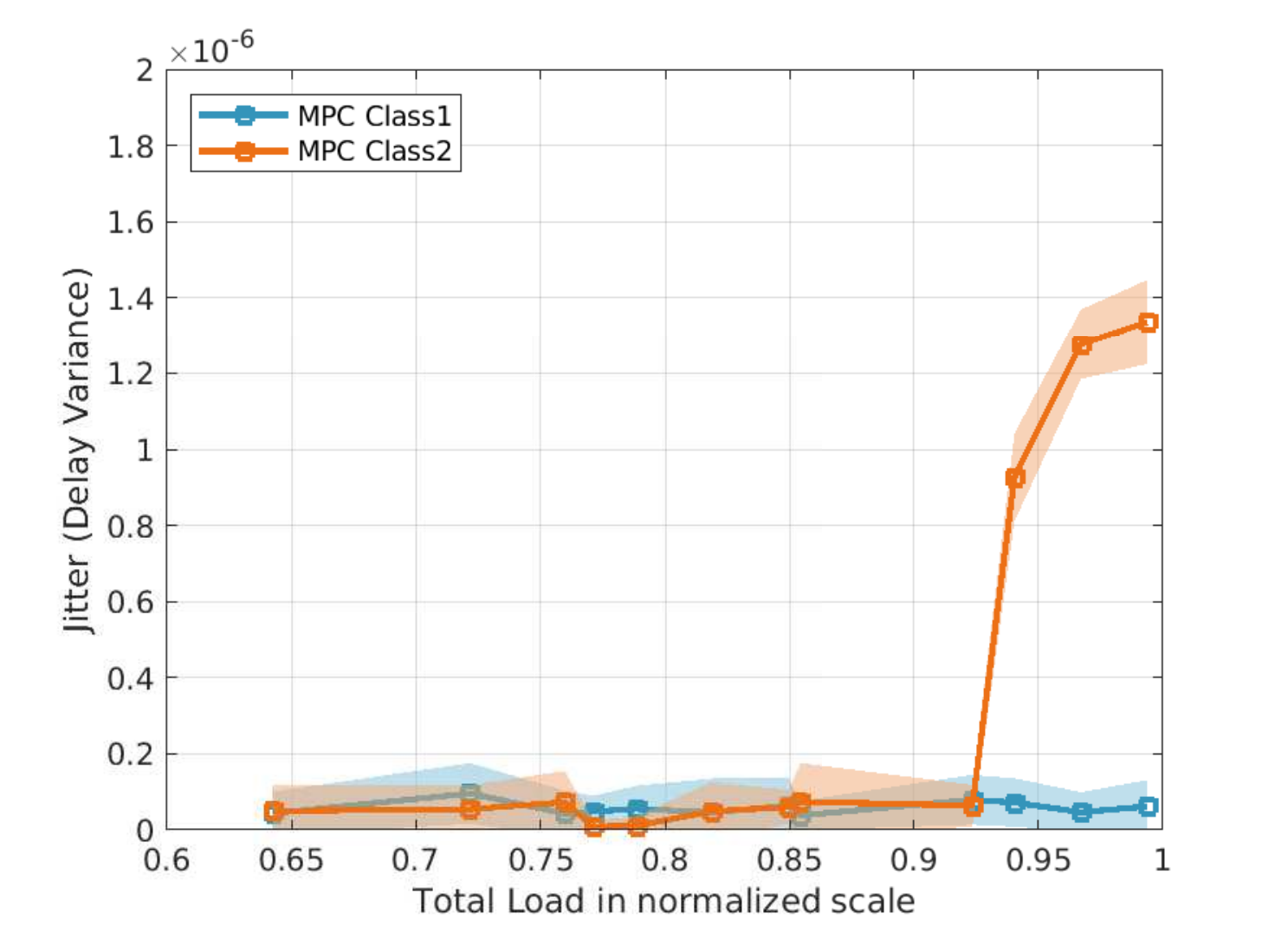}
			\caption{Jitter Performance for classes 1 \& 2}
			\label{fig:jitter50}
		\end{subfigure}
		\caption{Figure shows performance in terms of Drop, Throughput, Average Delay and Jitter with Best-effort Load of $50\%$ of the maximum load. Drop and Delay of Classes are maintained at cost of Best-effort delay. With the increase in best effort load, the average delay performance of class 2 suffers (our allocation is not delay optimal).}
		\label{fig:50}
	\end{figure*}
	
	To evaluate the performance of our proposal, we find the number of dropped packets with increase in aggregated loads of class 1 and class 2 services. We show the effect of increasing best-effort traffic as well. It is conceivable that the MPC based framework adheres to satisfying the strict QoS requirements first while best-effort traffic is not tackled by the concerned optimization. Best-effort traffic allocations are performed only if there is sufficient bandwidth after the optimal allocations obtained from \eqref{mpcprob2} are scheduled. Both OOB and priority based methods do not aim at guaranteeing the delay and bandwidth constraints and hence fail drastically when stringent real-time delay requirements are to be satisfied (as shown in Fig. \ref{fig:drop25}). On the other hand, the MPC based allocation strategy operates well within the feasible delay and bandwidth regions, unless the requests exceed the available bandwidth. Thus, as shown in Fig. \ref{fig:drop25}, MPC based allocations show almost zero drop. The best-effort drop (full queue size condition) is also plotted in the figure. We observe that the drop percentage for best-effort traffic is quite low because of the low percentage of best-effort traffic generated. With increase in class 1 and class 2 traffic, the percentage of drop for best-effort traffic increases.
	
	We can also observe the effect on increasing percentage of best-effort traffic. As shown in Fig. \ref{fig:drop50}, the best-effort drop increases significantly and this shows the trade-off for using MPC based allocations. MPC allocations try and ensure that the delay of class 1 and class 2 types of service do not get violated at the expense of dropping best-effort traffic.
	
	\subsection{Comparison based on Throughput}
	Througput is calculated as the percentage of time the link remains busy for clearing packets with respect to the total simulation time. Since MPC maximizes throughput as its objective, we observe its throughput performance when compared with the existing proposals. As expected, in Fig. \ref{fig:through25} and Fig. \ref{fig:through50}, the MPC having the optimal throughput shows the best performance. As pointed out earlier, a major drawback of OOB is that allocation for fog requests depend on the corresponding OLT allocation window. Hence, although there are two wavelenghts in use, utilization is drastically affected, as evident from Figs. \ref{fig:through25} and \ref{fig:through50}. On the contrary, the priority based scheme performs fairly well since it utilizes its available bandwidth with either the OLT requests or the prioritized fog requests. When the best-effort load increases, we start to observe that full throughput is achievable at lower loads for class 1 and class 2 services. This shows that even when the priority classes do not have traffic, the protocol does not suffer from light-load penalty, which is a common problem in priority based allocation \cite{IPACT}.
	\subsection{Comparison based on Average Delay}
	As discussed, we compare the average delay performance for the mentioned protocols. The average delay for both classes is drastically less compared to those for OOB and Priority based protocols. It is important to note that the objective of our MPC does not minimize average delay and was rather throughput based. However, the average delay performance for the most stringent class can be seen to outperform other counterparts as shown in Fig. \ref{fig:delay1_25} and \ref{fig:delay150}. The same cannot be said for the other classes as can be seen for some loads for the second class of service as shown in Fig. \ref{fig:delay225} and \ref{fig:delay225}. Also as evident from Fig. \ref{fig:delaybe25} and \ref{fig:delaybe50}, the best effort service suffers in terms of average delay when compared to that of a Priority based slicing protocol. It is important to note that the average delay performance is mostly by design and can be manipulated by adequate prioritization and handling of allocation decisions. When allocating class 1 and class 2 traffic from same virtual queue, we first allocate those of class 1 first and then allocate class 2. By changing this allocation strategy, we may obtain a desired average delay performance.
	
	We do not show the best effort performance for OOB since it is similar to that of class 2 because OOB does not handle classes separately but have differentiation between services for fog and OLT. We treated the traffic for class 1 as that intended for fog and all other classes get mapped to that for OLT.   
	\subsection{Jitter performance}
	Jitter can have many definitions. We use variance of delay as the measure for jitter. The jitter performance shows a peculiar behavior. The jitter performance for other protocols were drastically poor and could not be presented on the same scale (almost 10-100 times more). As shown in Fig.  \ref{fig:jitter25}, at low best-effort loads, the MPC based allocations show drastic increase in jitter for class 2 while class 1 experiences almost an uniform jitter performance. This might be due to the fact that at low loads, the buffer does not remain occupied at all times and hence delays mostly depend on the random arrival times. On the other hand, at higher best-effort loads, the buffers remain filled with more probability and hence jitter performance improves as is evident from Fig. \ref{fig:jitter50}.

	\subsection{Sensitivity Analysis}
		
		
	It is important to note that MPC requires future information (in this case arrivals) to obtain optimal allocations. Thus, implementation of MPC requires prediction of traffic from previous data. To evaluate its robustness, we introduce noise (Gaussian with zero mean and variance 25) to perfect prediction and look at the allocations obtained due to same. As shown in Fig. \ref{robust1} and \ref{robust2}, for both classes, the drop percentage remains well within 1\% (for class 1 it is within 0.1\%) which is quite desirable and exhibits the robust behavior of our algorithm .
	\begin{figure}[!ht]
		\centering
		\includegraphics[scale=0.5]{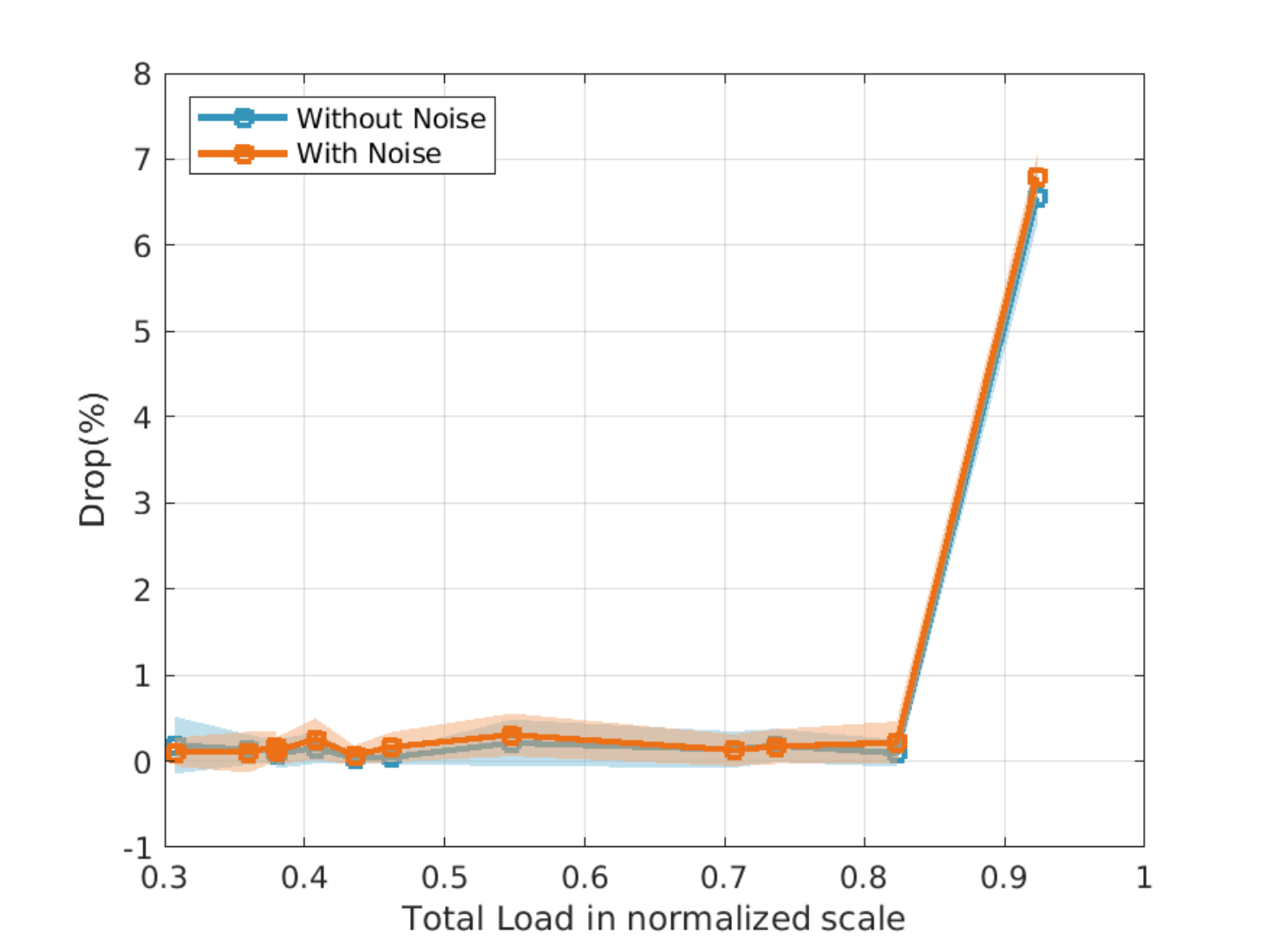}
		\caption{Drop percentage  for class 1 with and without noise. Noise is Gaussian with mean 0 and variance 25.}
		\label{robust1}
	\end{figure}
	\begin{figure}[!ht]
		\centering
		\includegraphics[scale=0.5]{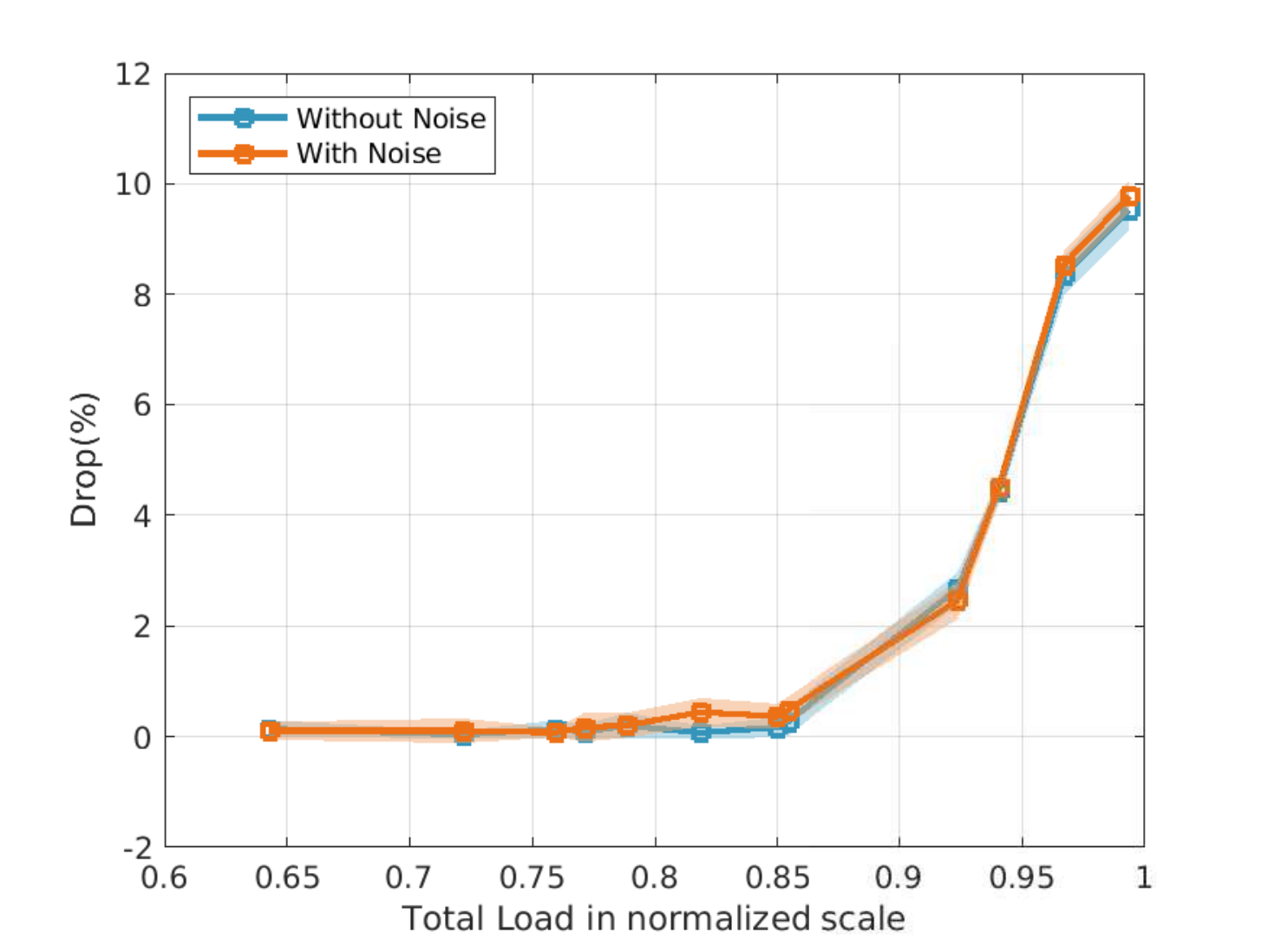}
		\caption{Drop percentage  for class 2 with and without noise. Noise is Gaussian with mean 0 and variance 25.}
		\label{robust2}
	\end{figure}
	\section{Conclusion}
	In this paper, we addressed the problem of resource allocation for emerging applications with bursty traffic patterns, requiring delay tolerance in order of milliseconds and high bandwidth requirements. We introduced a mechanism that aids in tracking delay and subsequently formed an MPC based model for obtaining optimal allocation policies. It is important that the optimal allocation policy is determined with low computational complexity. To this purpose, we showed that the MPC problem could be implemented by solving a linear program. As an implementation exercise, we chose PON as the networking framework where an edge/fog node is installed to provide real-time Internet services. Our MPC based allocations outperform the existing ones with respect to average delay, throughput and delay violation metrics. Further, the obtained allocations scale well when the number of subscribers are increased. The obtained allocations also show robust properties when subject to noisy predictions. Interesting extensions would be to look for further reductions in computational complexity. Also, it could be interesting to see the effect of using MPC over learned traffic with help of LSTMs which have proven to be highly efficient time-series predictors.
	\label{conc}
	\appendix
\section*{Proof of Total Unimodularity (Proposition 2)}
\begin{definition}
	A matrix is totally unimodular (TU) if the determinant of any of its square sub-matrices is either $0,1, or -1$.
\end{definition}
\textit{\textbf{Lemma 1}: If $\mathbf{A}$ is TU and $\mathbf{b}$ is integral, $\max\{\mathbf{cx}|\mathbf{Ax}\leq \mathbf{b}\}$ has integral optima.}\cite{schrijver1998theory,wiki:Unimodular_matrix}

Evidently, to use this result, we need some characterization of such matrices which would allow us to identify a TU matrix. The following results will be useful in our efforts to prove our proposition.

\textit{\textbf{Lemma 2}: If $\mathbf{A}$ is TU, the matrix obtained by removing or adding unit columns (column of an identity matrix) or unit rows (row of an identity matrix) is also TU.} \cite{schrijver1998theory}

\textit{\textbf{Lemma 3 (Ghouila-Houri)}: A matrix $\mathbf{A}$ is TU if and only if for each subset $R$ of rows of A, there exists two disjoint sets $R_1$ and $R_2$ such that $\sum\limits_{i\in R_1}a_{ij} - \sum\limits_{i\in R_2}a_{ij}\in \{0,-1,1\}$. In other words for every subset of rows of $\mathbf{A}$, we can find two disjoint sets such that the difference of the sum of rows would result in a vector with entries $0,1$ or $-1$.} \cite{schrijver1998theory,wiki:Unimodular_matrix}

We construct the matrix $\mathbf{A}$ from the constraints of our problem and use the aforementioned results to prove the proposition. We consider the single class case. The arguments can be extended to multi-class as well. The constraint set for class $c$ is given by the equations:
	\begin{subequations}
		\begin{gather}
		Q_{i-1}^c(t+1) = Q_{i}^c(t)-x_i^c(t) \; \forall c,i\; 1 < i \leq K^c \label{statecon1} \\
		Q_{K^c}(t+1) = A^c(t), \: \forall t,c 	\label{statecon2} \\
		x_i^c(t) \leq Q_i^c(t), \: \forall t,c \label{queuecon} \\
		\sum_{i = 1}^{K^c} x_i^c(t) \leq \Lambda, \forall \: t \label{bcon1} \\
		\sum_{t=0}^{H}\sum_{i = 1}^{K^c} x_i^c(t)\leq \Lambda^c, \: \forall \: c \label{bcon2}
		\end{gather}
	\end{subequations}
It is important to note that the bound $Q_i^c(t)$ is not a constant but involves optimization variables from previous time slots. Hence we should use \eqref{statecon1},\eqref{statecon2} recursively and obtain variable free bounds as follows:

	\begin{align}
		\nonumber
		&x_i^c(H)\leq Q_i^c(H) = Q_{i+1}^c(H-1) - x_{i+1}^c(H-1) \\ 
		&\implies x_i^c(H) + x_{i+1}^c(H-1) \leq Q_{i+1}^c(H-1) 
		\end{align}
		Continuing in this way, we have two cases corresponding to $H<K^c$ and $K^c\leq H$:
		
	    Case1: $H<K^c$
	    
		\begin{align}
		\nonumber
		\sum_{j = i}^{H+i} x^c_j(H-j+i) &\leq Q_{H+i}^c(0), \; H+i<K^c \\
		\nonumber
		\sum_{j = i}^{K^c} x^c_j(H-j+i) &\leq Q_{K^c}^c(H-K^c+i), \; H+i\geq K^c
		\end{align}
		
	    Case2: $K^c \leq H$ 
	    
	    \begin{align}
	    \nonumber
		\sum_{j = i}^{K^c} x^c_j(H-j+i) &\leq Q_{K^c}^c(H-K^c+i)  \\
		\nonumber
		&= A^c(H-K^c+i-1)
		\end{align}
It is easy to see that in either case, we either end up with a bound which is dependent on initial queue states (corresponding to slot 0) or the arrivals in each slot which govern the subsequent queue states.
Now we are ready to construct the matrix. For purpose of illustration, we will use the case of $K^c = 3$ and $H = 3$. However, our arguments will be general and can be extended for any $K^c$ and $H$. For sake of simplicity, we omit the superscript $c$.
\setcounter{MaxMatrixCols}{20}
\begin{align*}
\mathbf{A} = \begin{pmatrix}
1 & 0 & 0 & 0 & 0 & 0 & 0 & 0 & 0 & 0 & 0 & 0 \\
0 & 1 & 0 & 0 & 0 & 0 & 0 & 0 & 0 & 0 & 0 & 0\\
0 & 0 & 1 & 0 & 0 & 0 & 0 & 0 & 0 & 0 & 0 & 0\\
0 & 1 & 0 & 1 & 0 & 0 & 0 & 0 & 0 & 0 & 0 & 0\\
0 & 0 & 1 & 0 & 1 & 0 & 0 & 0 & 0 & 0 & 0 & 0\\
0 & 0 & 0 & 0 & 0 & 1 & 0 & 0 & 0 & 0 & 0 & 0\\
0 & 0 & 1 & 0 & 1 & 0 & 1 & 0 & 0 & 0 & 0 & 0\\
0 & 0 & 0 & 0 & 0 & 1 & 0 & 1 & 0 & 0 & 0 & 0\\
0 & 0 & 0 & 0 & 0 & 0 & 0 & 0 & 1 & 0 & 0 & 0\\
0 & 0 & 0 & 0 & 0 & 1 & 0 & 1 & 0 & 1 & 0 & 0\\
0 & 0 & 0 & 0 & 0 & 0 & 0 & 0 & 1 & 0 & 1 & 0\\
0 & 0 & 0 & 0 & 0 & 0 & 0 & 0 & 0 & 0 & 0 & 1\\
1 & 1 & 1 & 1 & 1 & 1 & 1 & 1 & 1 & 1 & 1 & 1\\
1 & 1 & 1 & 0 & 0 & 0 & 0 & 0 & 0 & 0 & 0 & 0\\
0 & 0 & 0 & 1 & 1 & 1 & 0 & 0 & 0 & 0 & 0 & 0\\
0 & 0 & 0 & 0 & 0 & 0 & 1 & 1 & 1 & 0 & 0 & 0\\
0 & 0 & 0 & 0 & 0 & 0 & 0 & 0 & 0 & 1 & 1 & 1\\
\end{pmatrix}
\end{align*}
The first 12 rows of the matrix correspond to the constraints \eqref{queuecon}. The $13^{th}$ row corresponds to constraint \eqref{bcon2} while the last 4 rows realize constraints \eqref{bcon2} for each slot/horizon. The columns are arranged in order of queues followed by slot index i.e., $\{x_1(0),x_2(0),x_3(0),x_1(1),\dots,x_3(3)\}$. The variables corresponding to initial time slot $0$ and the ${K^{c}}^{th}
$ queues do not require recursive expansion and hence correspondingly have unit rows. For proving $\mathbf{A}$ is TU, we need not consider these rows, since adding them do not affect the TU property of the matrix (due to Lemma 2). Hence, we are left with the following matrix.
\begin{align*}
\mathbf{A'} = \begin{pmatrix}
0 & 1 & 0 & 1 & 0 & 0 & 0 & 0 & 0 & 0 & 0 & 0\\
0 & 0 & 1 & 0 & 1 & 0 & 0 & 0 & 0 & 0 & 0 & 0\\
0 & 0 & 1 & 0 & 1 & 0 & 1 & 0 & 0 & 0 & 0 & 0\\
0 & 0 & 0 & 0 & 0 & 1 & 0 & 1 & 0 & 0 & 0 & 0\\
0 & 0 & 0 & 0 & 0 & 1 & 0 & 1 & 0 & 1 & 0 & 0\\
0 & 0 & 0 & 0 & 0 & 0 & 0 & 0 & 1 & 0 & 1 & 0\\
1 & 1 & 1 & 1 & 1 & 1 & 1 & 1 & 1 & 1 & 1 & 1\\
1 & 1 & 1 & 0 & 0 & 0 & 0 & 0 & 0 & 0 & 0 & 0\\
0 & 0 & 0 & 1 & 1 & 1 & 0 & 0 & 0 & 0 & 0 & 0\\
0 & 0 & 0 & 0 & 0 & 0 & 1 & 1 & 1 & 0 & 0 & 0\\
0 & 0 & 0 & 0 & 0 & 0 & 0 & 0 & 0 & 1 & 1 & 1\\
\end{pmatrix}
\end{align*}

We use Result 3 to prove $\mathbf{A'}$ is TU. To do so, we first inspect the matrix and present some observations. The rows corresponding to \eqref{queuecon} (first 6 rows) have the property that for any two rows, either the relative position of 1's and 0's do not match (commonly termed as orthogonal vectors) or if the positions match, the two rows differ by 1's occurring at regular intervals of 0's. For example: if we have the first two rows, they have 1's and 0's in alternate positions. If we have the second and third rows, the positions of  1's and 0's match but they differ by a 1 appearing after a 0 following the pattern 101. When $K^c>3$, the pattern introduces more number zeros between 1's. Further, the row with all 1's can be realized by adding the last four rows since, for each of the last four rows have their 1's placed at different positions (they are also mutually orthogonal).

Suppose we choose a subset of rows from the set of rows. The rows can either correspond to \eqref{queuecon} (first 6 rows), the all 1's (corresponding \eqref{bcon2}) or that corresponding to \eqref{bcon1} (last 4 rows). While considering the rows corresponding to \eqref{queuecon}, the idea is to construct $R_1$ and $R_2$ so that the sum of rows in $R_1$ have maximum number of 1's in the resulting vector. Clearly, this can be done by selecting all orthogonal rows and the rows with greater number of 1's when the relative positions match. The rows having same relative positions with lesser number of 1's compared to others are placed in $R_2$. Due to this construction, it is evident that the difference of sum of rows in $R_1$ and $R_2$ will only result in entries with either 0 or 1. Next, if we have the row with all 1's in our choice of subset, we place it in $R_2$ so that the difference will always result in entries with 0, 1 or -1. If our choice of subset have rows corresponding to \eqref{bcon1} but do not contain the row with all 1's, we place it in $R_2$ else, we place the row with all 1's in $R_2$ and the rows corresponding to \eqref{bcon1} in $R_1$. Since these rows are mutually orthogonal and their sum can only result in consecutive 1's, the mentioned assignment will always leave the difference of the rows between $R_1$ and $R_2$ with entries in  $\{0,1,-1\}$.

For example: From the set of rows of $\mathbf{A'}$ let $R = \{2,3,4,5,6,7,8\}$ be a subset of rows. By our mentioned method, we first place rows $\{3,5,6\}$ in $R_1$ and $\{2,4\}$ in $R_2$. Note that the difference of sum of rows will only lead to entries 0 or 1. Now we have the row of all 1's and row 8 which has 1's as the first three entries. We place the row 7 in $R_2$ and row 8 in $R_1$. Thus, $R_1 = \{3,5,6,8\}$, $R_2 = \{2,4,7\}$. The difference of the sum is $[0,0,0,-1,-1,-1,0,-1,0,0,0,-1]$.

For multi-class scenario, more number of variables are introduced in the matrix for each time slot. As an example we can consider the scenario for two classes, wherein the number of variables for each slot becomes $K^1+K^2$. In this case, the constraint \eqref{bcon2} has to be implemented for each class and hence instead of a single row of all 1's, we have a two rows one of which  follows the pattern $K^1$ 1's followed by $K^2$ 0's while the other has  $K^1$ 0's followed by $K^2$ 1's. We can apply all our previous arguments except for the case when one of these two rows is in the chosen set of rows along with one or more rows corresponding to \eqref{bcon1}. In this case, there is a possibility that the sum of difference between the rows gives 2. However, this can occur only for one particular class since we have only one row corresponding to \eqref{bcon2}. To circumvent this issue, we can first construct the two row sets for each class separately corresponding to constraints \eqref{queuecon}. Earlier it was ensured that the sum of difference of these two sets would have entries 0 or 1. The class which might encounter the aforementioned problem can be readjusted by simply swapping its corresponding row sets so that the sum of difference of those would have entries 0 or -1 instead of 0 or 1. Now, the previously mentioned assignment steps for rows corresponding to \eqref{bcon1} and \eqref{bcon2} can be followed to yield the sum of difference in $\{0,1,-1\}$.

	\bibliographystyle{IEEEtran}
	\bibliography{references}
\end{document}